\newtheorem{definition}{Definition}
\newtheorem{lemma}[definition]{Lemma}
\newtheorem*{lemma*}{Lemma}
\newtheorem{theorem}[definition]{Theorem}
\newtheorem{corollary}[definition]{Corollary}
\global\long\def\one{\mathds{1}}
\global\long\def\trace{\operatorname{Tr}}
\global\long\def\ketbra#1#2{\ket{#1}\!\bra{#2}}
\begin{document}

\title{Certifying quantum memories with coherence}

\author{Timo Simnacher}

\affiliation{Naturwissenschaftlich-Technische Fakult{\"a}t, 
Universit{\"a}t Siegen, Walter-Flex-Stra{\ss}e~3, 57068 Siegen, 
Germany}

\author{Nikolai Wyderka}

\affiliation{Naturwissenschaftlich-Technische Fakult{\"a}t, 
Universit{\"a}t Siegen, Walter-Flex-Stra{\ss}e~3, 57068 Siegen, 
Germany}

\author{Cornelia Spee}

\affiliation{Naturwissenschaftlich-Technische Fakult{\"a}t, 
Universit{\"a}t Siegen, Walter-Flex-Stra{\ss}e~3, 57068 Siegen, 
Germany}

\author{Xiao-Dong Yu}

\affiliation{Naturwissenschaftlich-Technische Fakult{\"a}t, 
Universit{\"a}t Siegen, Walter-Flex-Stra{\ss}e~3, 57068 Siegen, 
Germany}

\author{Otfried G\"uhne}

\affiliation{Naturwissenschaftlich-Technische Fakult{\"a}t, 
Universit{\"a}t Siegen, Walter-Flex-Stra{\ss}e~3, 57068 Siegen, 
Germany}

\date{\today}

\begin{abstract}
Quantum memories are an important building block for quantum information processing. 
Ideally, these memories preserve the quantum properties of the input.
We present general criteria for measures to evaluate the quality of quantum memories. Then, we
introduce a quality measure based on coherence satisfying these criteria,
which we characterize in detail for the qubit case. The measure can be estimated from sparse
experimental data and may be generalized to characterize other building blocks,
such as quantum gates and teleportation schemes.
\end{abstract}

\maketitle
\section{Introduction}
In order to work, quantum computers need reliable and well-characterized 
routines and devices. The loss of quantum coherence, however, is one of 
the major obstacles on the way to a scalable platform for quantum computing, 
and the suppression of decoherence is known as one of the DiVincenzo criteria 
for quantum computers \cite{diVincenzo2000}. One main ingredient in any 
computing architecture is the memory. Quantum computers are no exception 
and furthermore, quantum memories play a central role in the development 
of quantum repeaters \cite{simon2010quantumreview, briegelrepeater, duanrepeater}. Consequently, the search for reliable 
systems that store quantum states for a reasonable amount of time while 
preserving quantum properties is an active area of research 
\cite{julsgaard2004, choi2008, zhao2009, hedges2010, polzikexperiment, 
duanexperiment, guoexperiment}.

A possible way to verify the proper functioning of quantum gates and 
quantum memories is to completely characterize their behavior via quantum 
process tomography \cite{nielsentomography, tomographyresource}. This, 
however, requires an effort exponentially increasing in the size of the system. 
More importantly, it is desirable to determine the change of physical 
properties, such as entanglement and coherence, under the prescribed time evolution 
since these convey the quantumness of the underlying process.
By contrast, a complete characterization does not distinguish between these characteristics and
minor details, making it harder to identify the main features.
Therefore, it is beneficial to describe devices directly by their effect on physical phenomena. 

Several methods have been suggested to characterize quantum memories: 
The quantumness of channels has been assessed based on whether or not they preserve entanglement, focusing on reducing the number of measurements in bipartite optical systems \cite{haeselerlutkenhaus, haeselerlutkenhaus1}. 
Furthermore, quantum steering has been considered as a way to evaluate the performance of quantum channels in the case of untrusted measurement devices, again distinguishing channels that do and do not preserve entanglement \cite{pusey}. 
Finally, a resource theory of quantum memories has been developed \cite{rosset2018resource}. 
The free resources are channels that do not preserve entanglement. 
Using arbitrary pre- and postprocessing accompanied by unlimited classical memory as free operations, the authors establish a game-theoretic way to assess quantum memory performance based on the entanglement of the corresponding Choi state.
Nonetheless, these attempts require either well characterized test states as 
inputs, many measurements on the output or an advanced scheme that has to be implemented.

First conditions on how to generally assess the performance of quantum memories 
were discussed in Ref.~\cite{simon2010quantumreview}. That work suggests 
using the fidelity as a performance measure.
In fact, instead of the fidelity, any distance measure between the input- and the output state would
be suitable to measure the performance of such devices, e.g.~ a measure based on the coherence of the states \cite{hu2017relative}.
As the authors of Ref.~\cite{simon2010quantumreview} note, however, 
the fidelity is sensitive to unitary transformations of the input,
which may be compensated by the quantum computer controlling the interface.
With this in mind, the authors propose to use the purity of the memory
instead, which is indeed insensitive to unitary transformations. However,
the purity of a channel yielding a fixed, pure state independent of the 
input is maximal, but such a channel would certainly not qualify as 
a proper memory.

With these considerations in mind, we introduce  general criteria for 
quality measures of quantum memories. First, they should clearly distinguish 
schemes that require storing only classical information from perfect unitary 
transformations. Second, as we assume that unitary transformations can be 
corrected by the underlying quantum computer, the quality of a quantum memory
should be invariant under such unitary transformations.

We then propose a measure that obeys these natural properties
using the phenomenon of coherence. The key idea is that an ideal quantum memory
preserves the coherence in any basis. The measure can be used to
prove that a memory preserves entanglement, moreover, it can be estimated with
few measurements, without the need for well-characterized input states. 
Our concept may be generalized to characterize also other quantum primitives 
such as teleportation schemes and, using generalized notions of coherence 
\cite{ringbauerpiani, kraftpiani}, also to multi-particle quantum gates.

\section{Memory quality measures}
To start, let us study what physical properties a measure for the quality
of a quantum memory should have. As non-classical properties 
are essential for many quantum algorithms, the storage should 
preserve as many of these properties as possible. A perfect quantum memory 
is given by the identity channel. In practice, however, this is rather difficult 
to achieve. In contrast to that, measure-and-prepare (M\&P) schemes (also known as entanglement-breaking 
channels) can be easily simulated using only classical storage. One
just performs measurements on the input state and stores the result. Based
on that, one then prepares a quantum state on demand.

These two examples show that a measure for the quality of
a quantum memory should have two natural properties: First, 
it should be maximal for memories that preserve the input state perfectly. As we 
assume that we can perform unitary rotations, we also allow the memory to 
apply a known and fixed unitary rotation to the input. Second, the measure 
should have a non-maximal quality for the M\&P schemes 
described above, certifying genuine quantum storage.

Formally, M\&P channels can be written as 
\cite{entbrchan}
\begin{align}
    \mathcal{M}(\rho) = \sum_{\lambda} \trace(E_\lambda \rho)\rho_\lambda,
\end{align}
where the set $\{E_\lambda\}$ forms a positive operator valued measure
and the $\rho_\lambda$ are density matrices. In the following, 
let $\mathcal{M}$ be a quantum channel, i.e.,~a completely 
positive, trace preserving map \cite{bengtsson2017geometry}. We can now formulate our criteria for quality
measures.

\begin{definition} \label{defmap}
A map $Q(\mathcal{M}) \in [0,1]$ for a channel $\mathcal{M}$ is called 
\emph{memory quality measure}, if it satisfies the following.
\begin{enumerate}
    \item[M1:] $Q(\mathcal{M})=1$ if $\mathcal{M}(\rho) = V \rho V^\dagger$ for
    some unitary $V$,
    \item[M2:] $Q(\mathcal{M})\leq c$ for some constant $c \in [0,1)$ if 
    $\mathcal{M}$ is an M\&P channel.
\end{enumerate}
A memory quality measure is called \emph{sharp}, if it additionally fulfills the following.
\begin{enumerate}
    \item[M1':] $Q(\mathcal{M})=1 \Leftrightarrow \mathcal{M}(\rho) = V \rho V^\dagger$ for some unitary $V$.
\end{enumerate}
\end{definition}

Obviously, condition M1 implies that the identity channel has unit quality.
Furthermore, for continuous sharp measures, M1' implies M2 since M\&P channels 
have a finite distance to the set of unitary channels due to the compactness of the set \cite{bengtsson2017geometry}.

\section{Definition of the measures}
Recently, there has been growing interest in coherence in the light of resource 
theories \cite{baumgratzCoherence}. This has led to the development of various 
coherence measures that quantify the amount of coherence present in a given $D$-dimensional state.
For a fixed basis (defined by some unitary 
$U$ such that $\ket{b_i}:=U\ket{i}$), we use the normalized 
robustness of coherence \cite{napoli2016robustness}
\begin{align}
    C_U(\rho) := \frac1{D-1} \min_{\tau \in \mathcal{D}} \left\{ s \geq 0 \middle\vert \frac{\rho + s \tau}{1 + s} \in \mathcal{I}_U \right\},
\end{align}
where $\mathcal{D}$ is the set of all $D$-dimensional states and $\mathcal{I}_U$ is the set of incoherent (i.e., diagonal) $D$-dimensional states with regard to~the basis $U\ket{j}$.
However, our results are valid for any continuous and convex coherence 
measure with the property that the only states maximizing the measure for 
a fixed basis $U$ are given by
\begin{align}\label{eq:maxcoh}
\ket{\Psi_U^{\vec{\alpha}}} 
:=  \frac{1}{\sqrt{D}} \sum_{j=0}^{D-1} e^{i \alpha_j} \ket{b_j}
= U Z_{\vec{\alpha}} \ket{+},
\end{align}
where $\vec{\alpha}$ is some $D$-dimensional vector of phases and 
$Z_{\vec{\alpha}}$ is a diagonal unitary matrix with entries $e^{i\alpha_j}$, 
acting on $\ket{+} := \frac{1}{\sqrt{D}}\sum_i \ket{i}$. 
Note that the states in Eq.~(\ref{eq:maxcoh}) maximize any valid coherence 
monotone, and for many prominent coherence measures 
such as the robustness of coherence \cite{piani2016robustness}, the $l_1$-norm of coherence \cite{peng2016maximally}, and the relative entropy measure \cite{bai2015maximally}, they are the only states doing so. Furthermore, they are also maximally coherent in a 
resource theoretic sense \cite{baumgratzCoherence, peng2016maximally}.

We define a physically motivated quality measure from the following 
considerations: Given a quantum channel $\mathcal{M}$, there is a 
``most classical'' basis, in which even the most robust maximally 
coherent state with respect to that basis is mapped to a state with
small coherence. This basis is identified by our proposed measure, 
and the conserved coherence in this basis defines the quality.

\begin{definition} 
\label{qualities}
For a quantum channel $\mathcal{M}$, the quality $Q_0$ is given by
\begin{align}
    Q_0(\mathcal{M}) := \min_U \max_{\vec{\alpha}}  C_U[\mathcal{M}(\ket{\Psi_U^{\vec{\alpha}}})].
\end{align}
\end{definition}
Here, we write $\mathcal{M}(\ket{\Psi_U^{\vec{\alpha}}})$ 
instead of $\mathcal{M}(\ketbra{\Psi_U^{\vec{\alpha}}}{\Psi_U^{\vec{\alpha}}})$ 
for convenience. 
If $Q_0(\mathcal{M})=1$, then in any basis at least one maximally coherent state is preserved. Later, we show that this already implies that $\mathcal{M}$ is unitary.

To give an operational interpretation of $Q_0$, we consider
a phase discrimination task. Here, the improvement of the success probability  over naive guessing using a quantum state $\rho$ is determined by its robustness of coherence \cite{napoli2016robustness}.
If a suitable maximally coherent state is stored before it is used as a probe state, the improvement is quantified by $Q_0$. Thus, $Q_0$ certifies how well a quantum memory preserves the usefulness of a maximally coherent state for a phase discrimination task, without specifying the incoherent basis $U$.

Despite the clear physical interpretation of this measure, there are related quantities which turn out to be useful for the discussion. Therefore, we introduce two additional parameters, which provide an upper and lower bound on $Q_0$.
First, we consider the minimal coherence left in any basis of the most robust maximally coherent states if one minimizes over their bases:

\begin{definition}
For a quantum channel $\mathcal{M}$, the quantity $Q_-$ is defined by
\begin{align}
    Q_-(\mathcal{M}) := \min_{U,U^\prime} \max_{\vec{\alpha}}  C_{U^\prime}[\mathcal{M}(\ket{\Psi_U^{\vec{\alpha}}})].
\end{align}
\end{definition}
In contrast to $Q_0$, the basis of coherence is varied independently of 
the basis of the maximally coherent states. Thus, we have that 
$Q_-(\mathcal{M}) \leq Q_0(\mathcal{M})$. Second, as an upper bound to $Q_0$, 
we consider the minimal coherence in any basis maximized over all states in the range:

\begin{definition}
\label{qualities2}
For a quantum channel $\mathcal{M}$, the quantity $Q_+$ is defined by
\begin{align}
    Q_+(\mathcal{M}) :=\, & \min_{U} \max_{\rho}  C_{U}[\mathcal{M}(\rho)] \nonumber \\
                      =\, & \min_{U} \max_{\ket{\psi}}  C_{U}[\mathcal{M}(\ket{\psi})],
\end{align}
where the equality is due to the convexity of the coherence measure and linearity of $\mathcal{M}$.
\end{definition}

Here, in contrast to $Q_0$, the maximization is not limited to maximally coherent 
states. Hence, it holds that
\begin{align}
    Q_-(\mathcal{M}) \leq Q_0(\mathcal{M}) \leq Q_+(\mathcal{M}).
\end{align}
Due to the minimization over all bases $U$ (and $U^\prime$ for $Q_-$), 
for all channels $\mathcal{M}$ and unitary channels $\mathcal{V}$ with 
$\mathcal{V}(\rho) = V\rho V^\dagger$ where $V$ is some unitary, we have 
the following identities:
\begin{align}
 Q_\pm(\mathcal{M}) &= Q_\pm(\mathcal{V}\circ \mathcal{M}) = Q_\pm(\mathcal{M}\circ \mathcal{V}), 
 \nonumber \\
 Q_0(\mathcal{M}) &= Q_0(\mathcal{V}\circ \mathcal{M} \circ \mathcal{V}^{-1}).
\end{align}
The quantities $Q_\pm$ are completely invariant under prior and subsequent rotations, whereas $Q_0$ is only invariant under joint rotations. As such, the quantities $Q_\pm$ are useful to obtain bounds on $Q_0$.

Note that all measures are continuous in the space of quantum channels (for the proof, see Appendix A).

\section{Properties of the measures}
We now show that the quantities $Q_0$ and $Q_\pm$
are sharp memory quality measures. First, using the Sinkhorn normal form of unitaries \cite{sinkhorn}, 
we show the following.
\begin{lemma}
The measures $Q_\pm$ and $Q_0$ fulfill property M1, i.e.~$Q(\mathcal{V})=1$ for 
all unitary channels $\mathcal{V}$.
\end{lemma}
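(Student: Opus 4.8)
The plan is to show that for a unitary channel $\mathcal{V}(\rho) = V\rho V^\dagger$, all three quantities equal $1$. By the invariance identities already established, $Q_0(\mathcal{V}) = Q_0(\mathrm{id})$ and $Q_\pm(\mathcal{V}) = Q_\pm(\mathrm{id})$, so it suffices to treat the identity channel. For $Q_+(\mathrm{id}) = \min_U \max_{\ket{\psi}} C_U(\ket{\psi})$, the inner maximum is $1$ for \emph{every} $U$, because for any basis $U$ there exists a maximally coherent state of the form $\ket{\Psi_U^{\vec\alpha}}$ (indeed $\ket{+}$ after relabeling), and such states achieve the maximal value $1$ of the normalized coherence measure by the normalization convention. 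Hence $Q_+(\mathrm{id}) = 1$, and since $Q_- \le Q_0 \le Q_+ \le 1$, it is enough to prove $Q_-(\mathrm{id}) = 1$, i.e. $\min_{U,U'} \max_{\vec\alpha} C_{U'}(\ket{\Psi_U^{\vec\alpha}}) = 1$.

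So the real content is: for \emph{any} two bases $U, U'$, there is a choice of phases $\vec\alpha$ such that $\ket{\Psi_U^{\vec\alpha}}$ is maximally coherent with respect to the $U'$-basis, i.e. it has the form $\ket{\Psi_{U'}^{\vec\beta}}$ for some $\vec\beta$. Writing this out, we need a diagonal unitary $Z_{\vec\alpha}$ such that $U Z_{\vec\alpha}\ket{+}$ is, up to a global phase, equal to $U' Z_{\vec\beta}\ket{+}$ for some $\vec\beta$; equivalently, the vector $W Z_{\vec\alpha}\ket{+}$ should have all entries of equal modulus $1/\sqrt D$, where $W := (U')^\dagger U$ is an arbitrary unitary. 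In other words: for every $D\times D$ unitary $W$, the equal-superposition vector $\ket{+}$ can be ``rephased'' by a diagonal unitary on the right so that its image under $W$ again has all components of modulus $1/\sqrt D$. This is precisely where the \textbf{Sinkhorn normal form} enters: every unitary $W$ can be written as $W = D_1 S D_2$ where $D_1, D_2$ are diagonal unitaries and $S$ is unitary with all row sums and column sums equal to $1$ (equivalently, $S\ket{+} = \ket{+}$ and $\bra{+}S = \bra{+}$), in analogy with Sinkhorn's theorem for positive matrices \cite{sinkhorn}.

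The key steps, in order: (i) reduce to the identity channel via the established invariances and reduce $Q_-(\mathrm{id}) = 1$ to the claim that for every unitary $W$ there exist diagonal unitaries making $W Z_{\vec\alpha}\ket{+}$ flat; (ii) invoke the Sinkhorn normal form $W = D_1 S D_2$; (iii) choose $Z_{\vec\alpha} = D_2^\dagger$, so that $W Z_{\vec\alpha}\ket{+} = D_1 S \ket{+} = D_1 \ket{+}$, which has all entries of modulus $1/\sqrt D$ since $D_1$ is a diagonal unitary; (iv) conclude that $U Z_{\vec\alpha}\ket{+}$ is a maximally coherent state in the $U'$-basis, hence $C_{U'}[\,\cdot\,] = 1$, so the inner maximum over $\vec\alpha$ is $1$ for all $U, U'$ and therefore $Q_-(\mathrm{id}) = 1$; (v) combine with $Q_- \le Q_0 \le Q_+ \le 1$ and the invariance to get $Q_0(\mathcal{V}) = Q_\pm(\mathcal{V}) = 1$.

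The main obstacle is step (ii)--(iii): knowing that the right ``flattening'' diagonal unitary exists is exactly the nontrivial input, and it is supplied by the Sinkhorn normal form for unitary matrices rather than by an elementary computation. One should be slightly careful about whether the Sinkhorn decomposition holds for \emph{all} unitaries or only generically (there are known subtleties for matrices with certain zero patterns), but for the purpose of this lemma it is enough that such a decomposition exists for the relevant $W$, or one can argue by continuity using that the bases achieving the minimum can be approached by generic ones; since the coherence measure is continuous and the value $1$ is its maximum, a density argument closes any remaining gap. The rest is bookkeeping with the definitions and the normalization $C_U \in [0,1]$ with value $1$ attained exactly on the states of Eq.~(\ref{eq:maxcoh}).
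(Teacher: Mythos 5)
Your proof is correct and follows essentially the same route as the paper's: reduce to the identity channel by unitary invariance, reduce to $Q_-(\mathrm{id})=1$ via the chain $Q_-\le Q_0\le Q_+\le 1$, and then use the Sinkhorn normal form of a unitary from Ref.~\cite{sinkhorn} to rephase $\ket{+}$ so that its image under $(U')^\dagger U$ is again flat. The only cosmetic difference is that you state the target as ``all entries of modulus $1/\sqrt{D}$'' rather than exact equality with $\ket{+}$, and you add an unnecessary (the Idel--Wolf result holds for all unitaries) but harmless continuity caveat.
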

\begin{proof}
As $Q_-(\mathcal{M})\leq Q_0(\mathcal{M})\leq Q_+(\mathcal{M})$, it suffices to 
show the property for $Q_-$. Furthermore, as $Q_-$ is invariant under unitary 
rotations, it suffices to consider only the identity channel $\operatorname{id}$.
Recall that
\begin{align}
    Q_-(\operatorname{id}) = \min_{U,U^\prime} \max_{\vec{\alpha}} C_{U^\prime}(UZ_{\vec{\alpha}}\ket{+}) = 1,
\end{align}
where $Z_{\vec{\alpha}}$ is a diagonal matrix with phases $e^{i\alpha_j}$ as entries, 
is equivalent to the statement that for all bases $U$ and $U^\prime$, there exists 
a maximally coherent state in $U$ that is also maximally coherent in $U^\prime$. 
This can be stated as follows: for all $U$ there exist vectors $\vec{\alpha}$ and $\vec{\beta}$, such that
\begin{align}
    Z_{\vec{\beta}}^\dagger UZ_{\vec{\alpha}}\ket{+} = \ket{+}, \label{coherent_intersection}
\end{align}
which is equivalent to the statement that the sets of maximally coherent states 
with regard to~two different bases always have a non-empty intersection. This interesting 
geometrical question has been investigated and answered positively recently; 
it was shown that any unitary operator $U$ can be decomposed as \cite{sinkhorn}
\begin{align}
    U = Z_1 X Z_2,
\end{align}
where $Z_1$ and $Z_2$ are diagonal unitaries with the upper left entry equal to 
1 and $X$ is a unitary matrix where the elements in each row and each column 
sum to 1. Inserting this decomposition into Eq.~(\ref{coherent_intersection}) 
shows that choosing $\vec{\alpha}$ and $\vec{\beta}$ such that 
$Z_{\vec{\alpha}} = Z_1^\dagger$ and $Z_{\vec{\beta}} = Z_2$ yield the desired 
equality, 
as $\ket{+}$ is an eigenstate of $X$.
\end{proof}

Second, also the converse statement holds. The proof is given in Appendix B.
\begin{theorem}
$Q_\pm$ and $Q_0$ fulfill property M1', i.e., if $Q(\mathcal{M})=1$, then 
$\mathcal{M}$ is a unitary channel.
\end{theorem}

Finally, as the continuity of $Q_\pm$ and $Q_0$ together with property M1' implies property M2, it follows that:
\begin{corollary}
 The quantities $Q_\pm$ and $Q_0$ are sharp memory quality measures.
\end{corollary}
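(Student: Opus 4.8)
The plan is to assemble the corollary directly from the pieces already in hand: the Lemma (property M1), the Theorem (the forward implication of M1'), and the continuity established in Appendix~A. First I would observe that M1' is precisely the conjunction of the Lemma, which gives $Q(\mathcal{V})=1$ for every unitary channel $\mathcal{V}$, and the Theorem, which gives the converse implication; hence M1' holds for each of $Q_0$, $Q_-$ and $Q_+$, and in particular M1 holds as a special case.

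It then remains to deduce M2 from M1' together with continuity, exactly along the lines of the remark following Definition~\ref{defmap}. Here I would use that, on a fixed finite-dimensional Hilbert space, the set of all channels is compact and the set of M\&P (entanglement-breaking) channels is a closed, hence compact, subset of it \cite{bengtsson2017geometry}. Since $Q$ is continuous, it attains a maximum over this compact set at some M\&P channel $\mathcal{M}_0$. A unitary channel on a space of dimension $D \geq 2$ preserves entanglement and is therefore never entanglement-breaking, so $\mathcal{M}_0$ is not a unitary channel; by M1' this forces $Q(\mathcal{M}_0) < 1$. Setting $c := Q(\mathcal{M}_0) \in [0,1)$ yields $Q(\mathcal{M}) \leq c$ for every M\&P channel $\mathcal{M}$, which is property M2. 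Combining M1', M2 and continuity shows that each of $Q_0$, $Q_-$ and $Q_+$ is a sharp memory quality measure.

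Since the substantive work is carried by the Theorem, the only genuine loose end is the topological input: that the M\&P channels form a compact set disjoint from the unitary channels. Both facts are standard — closedness of the entanglement-breaking channels follows from closedness of the separable states under the Choi isomorphism, and disjointness is immediate in dimension at least two — so I expect this to be a routine verification rather than the main obstacle.
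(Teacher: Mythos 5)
Your proposal is correct and takes essentially the same route as the paper: the corollary is assembled from Lemma~5 (M1), Theorem~6 (M1'), and the continuity proved in Appendix~A, with M2 deduced from the compactness of the set of M\&P channels and their disjointness from the unitary channels. Your extreme-value argument (the continuous $Q$ attains its maximum $c<1$ on the compact M\&P set) is just an explicit spelling-out of the paper's one-line remark that for continuous sharp measures M1' implies M2.
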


In the case of single-qubit channels, we can find tight numerical bounds on the quality of M\&P channels (see Theorem \ref{theorem:qubitbounds} below). 

Additionally, the quality measure $Q_+$ satisfies a useful preprocessing property:

\begin{lemma}\label{lemma:prepqplus}
The quality measure $Q_+$ cannot be increased by preprocessing the input, i.e.~$Q_+(\mathcal{M}\circ \mathcal{N})\leq Q_+(\mathcal{M})$ for all quantum channels $\mathcal{M}$ and $\mathcal{N}$.
\end{lemma}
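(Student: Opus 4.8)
The plan is to unfold the definition of $Q_+$ and exploit the fact that the inner maximization over pure states in $Q_+(\mathcal{M}\circ\mathcal{N})$ ranges over the image of $\mathcal{N}$, which is contained in the full state space used to define $Q_+(\mathcal{M})$. Concretely, fix the optimal incoherent basis $U^\star$ for $\mathcal{M}$, i.e.~the $U$ achieving $\min_U \max_\rho C_U[\mathcal{M}(\rho)] = Q_+(\mathcal{M})$. Since $Q_+(\mathcal{M}\circ\mathcal{N})$ is itself a minimum over bases $U$, we may upper-bound it by the value at this particular $U^\star$:
\begin{align}
Q_+(\mathcal{M}\circ\mathcal{N}) \leq \max_{\rho} C_{U^\star}[\mathcal{M}(\mathcal{N}(\rho))].
\end{align}

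Next I would observe that as $\rho$ ranges over all states, $\mathcal{N}(\rho)$ ranges over a subset of all states (because $\mathcal{N}$ is a channel, hence maps states to states). Therefore the maximum on the right-hand side is taken over a smaller set than the maximum defining $\max_\sigma C_{U^\star}[\mathcal{M}(\sigma)]$, giving
\begin{align}
\max_{\rho} C_{U^\star}[\mathcal{M}(\mathcal{N}(\rho))] \leq \max_{\sigma} C_{U^\star}[\mathcal{M}(\sigma)] = Q_+(\mathcal{M}),
\end{align}
where the final equality holds by the choice of $U^\star$. Chaining the two displays yields $Q_+(\mathcal{M}\circ\mathcal{N})\leq Q_+(\mathcal{M})$, which is the claim.

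This argument is essentially a one-line set-inclusion observation, so I do not anticipate a genuine obstacle; the only point requiring a little care is the order of quantifiers. One must select the optimal basis for $\mathcal{M}$ \emph{first} and then compare the state-maximizations at that fixed basis — attempting the comparison basis-by-basis in the other order does not immediately work, since the optimal basis for $\mathcal{M}\circ\mathcal{N}$ need not be optimal for $\mathcal{M}$. Note also that the analogous statement fails for $Q_0$ (and is not claimed), because in $Q_0$ the state being fed in is tied to the basis $U$ via $\ket{\Psi_U^{\vec\alpha}}$, so inserting a preprocessing channel $\mathcal{N}$ breaks the correlation between the probe state and the coherence basis; the freedom that makes the $Q_+$ proof trivial — an unconstrained inner maximization over \emph{all} states — is exactly what $Q_0$ lacks. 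No compactness or continuity input is needed beyond knowing the relevant maxima and minima are attained, which follows from the continuity remark already established.
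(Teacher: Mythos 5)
Your argument is correct and is essentially the paper's own proof: both rest on the observation that $\mathcal{N}(\rho)$ ranges over a subset of all states, so the inner maximization for $\mathcal{M}\circ\mathcal{N}$ is dominated by that for $\mathcal{M}$ in every basis (the paper keeps the $\min_U$ on both sides rather than fixing the optimizer $U^\star$, but this is the same set-inclusion argument). Your additional remarks on quantifier order and on why the statement fails for $Q_0$ are accurate but not needed for the proof.
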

\begin{proof}
By definition, 
\begin{align}
Q_+(\mathcal{M}\circ \mathcal{N}) &=\min_U \max_{\rho}  C_U(\mathcal{M}(\mathcal{N}(\rho))) \nonumber \\
&\leq  \min_U \max_{\rho}  C_U(\mathcal{M}(\rho)) =Q_+(\mathcal{M}),
\end{align}
which proves the lemma.
\end{proof}

For $Q_-$, we can prove a similar statement for the case of unital, i.e., channels that map the maximally mixed state to itself, single-qubit 
channels (see Lemma~\ref{lemma:singlequbitunitalprep}).

The measures introduced in Ref.~\cite{rosset2018resource} are monotonous under pre- and postprocessing using unlimited classical memory and preexisting randomness. 
This is not true for $Q_0$ and $Q_\pm$. A counterexample is given by the channel $\mathcal{N}$ defined by $\vec{\lambda} = (0,0,1)$ and vanishing $\vec{\kappa}$, and the M\&P channel  $\mathcal{M}$ maximizing $Q_0$, given by $\vec{\lambda} = (0,0,\frac1{\sqrt2})$, $\vec{\kappa} = (\frac1{\sqrt2},0,0)$. Then, $Q_0(\mathcal{M}\circ \mathcal{N}) = \frac1{\sqrt2} \nleq Q_0(\mathcal{N}) = 0$. This counterexample also works for $Q_+$. For $Q_-$, choosing $\mathcal{N}$ as the planar channel with semiaxes $\vec{\lambda} = (0,\frac12, \frac12)$ and zero displacement, and $\mathcal{M}$ as the channel maximizing $Q_-$, i.e., defined by $\vec{\lambda} = (0,\frac1{\sqrt5}, \frac1{\sqrt5})$ and displacement $\vec{\kappa} = (\frac1{\sqrt5},0,0)$, leads to $Q_-(\mathcal{M}\circ \mathcal{N}) = \frac1{2\sqrt5}\nleq Q_-(\mathcal{N}) = 0$. The non-monotonicity is expected for measures based on coherence, because in contrast to entanglement, coherence can be created locally. Furthermore, if a measure is monotonous under the operations defined in Ref.~\cite{rosset2018resource}, it would assign the same quality to all M\&P channels. However, some M\&P channels are more useful than others for the task of phase discrimination.

It should be noted that the measures introduced here are not faithful in the sense that any non-M\&P channel can be detected. This is not possible with an efficiently computable single measure, because such a measure would solve the separability problem, which is NP-hard \cite{gharibian2010strong}.

\section{The single-qubit case}
\begin{figure}[t]
 \centering
 \includegraphics[width=0.65\columnwidth]{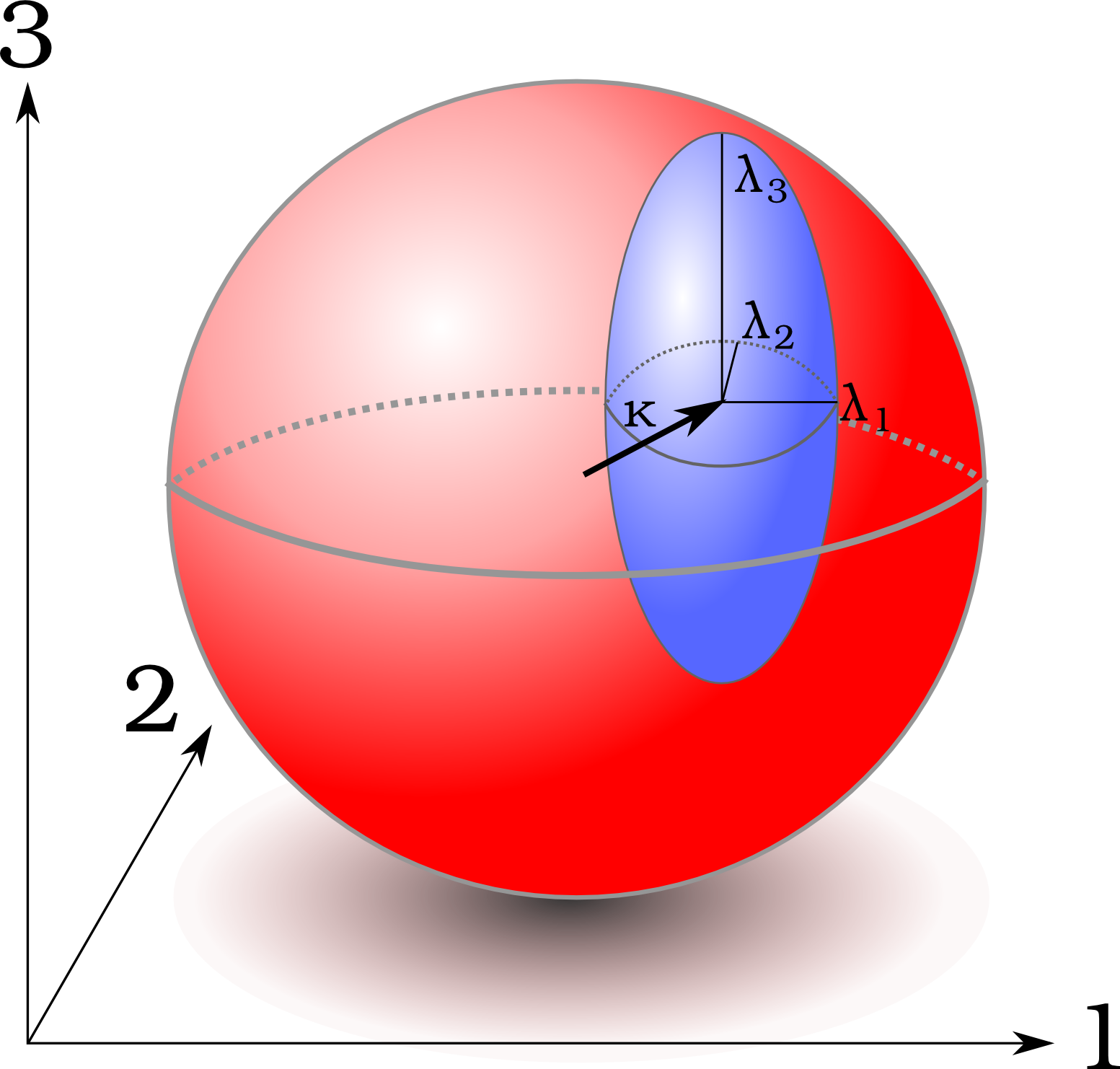}
 \caption{The image of the Bloch sphere (red area) of single-qubit maps is an ellipsoid (blue area) with semiaxes $\lambda_i$, displaced by $\vec{\kappa}$.}
 \label{fig:qubitmaps}
\end{figure}
The action of single-qubit channels can be well understood in the Bloch picture.
The Bloch decomposition of a qubit state is given by $\rho = \frac{1}{2} (\one + \vec{v} \cdot \vec{\sigma})$, where $\vec{v} \in \mathds{R}^3$ is required to have a length equal to or smaller than $1$ in order for $\rho$ to be positive semidefinite, and $\vec{\sigma}=(\sigma_x, \sigma_y, \sigma_z)^T$, with $\sigma_i$ being the Pauli matrices.

Any quantum channel corresponds to an affine transformation $\vec{v} \mapsto \Lambda\vec{v} + \vec\kappa$ with a real matrix $\Lambda$ and a displacement vector $\vec{\kappa}$ \cite{bengtsson2017geometry}, where some restrictions on $\Lambda$ and $\vec{\kappa}$ apply to ensure complete positivity. Thus, the image of any single-qubit channel $\mathcal{M}$ is given by an ellipsoid in the Bloch sphere, where the semiaxes are given by the singular values of $\Lambda$ and the ellipsoid is translated by $\vec\kappa$. The surface is given by the image of the pure states under $\mathcal{M}$ because of linearity (see Fig.~\ref{fig:qubitmaps}). Any maximally coherent state is a pure state, and vice versa, any pure state is maximally coherent in some basis. Since any transformation of $\vec{v}$ can be decomposed into rotations, contractions and a translation, the set of maximally coherent states in a fixed basis, forming a great circle in the Bloch picture, is mapped onto the boundary of an 
ellipse given by a cut through the center of the ellipsoid.

To find bounds on the quality of single-qubit M\&P channels, we use a geometric approach. 
$Q_-(\mathcal{M})$ determines the axis in the Bloch sphere and the ellipse on the image's surface of $\mathcal{M}$ that minimize the maximal distance of any point on this ellipse to the axis. This is because, in the computational basis, $C_\one(\rho) = |v_x + i v_y| = \sqrt{v_x^2 + v_y^2}$ \cite{napoli2016robustness}, which is the distance of a point at $\vec{v}$ from the $z$ axis, which defines the computational basis. For any other basis, the Bloch sphere can simply be rotated, leading to the same geometric result for any basis. For $Q_0(\mathcal{M})$, the ellipse is fixed by the axis depending on the channel $\mathcal{M}$. To find an upper bound on the measures $Q_-$ and $Q_+$ (and from the latter for $Q_0$), it is sufficient to replace the minimization over all axes by a fixed set of directions in the Bloch sphere, which allows us to obtain the following bounds.

\begin{figure}[t]
 \centering
 \includegraphics[width=0.4\columnwidth]{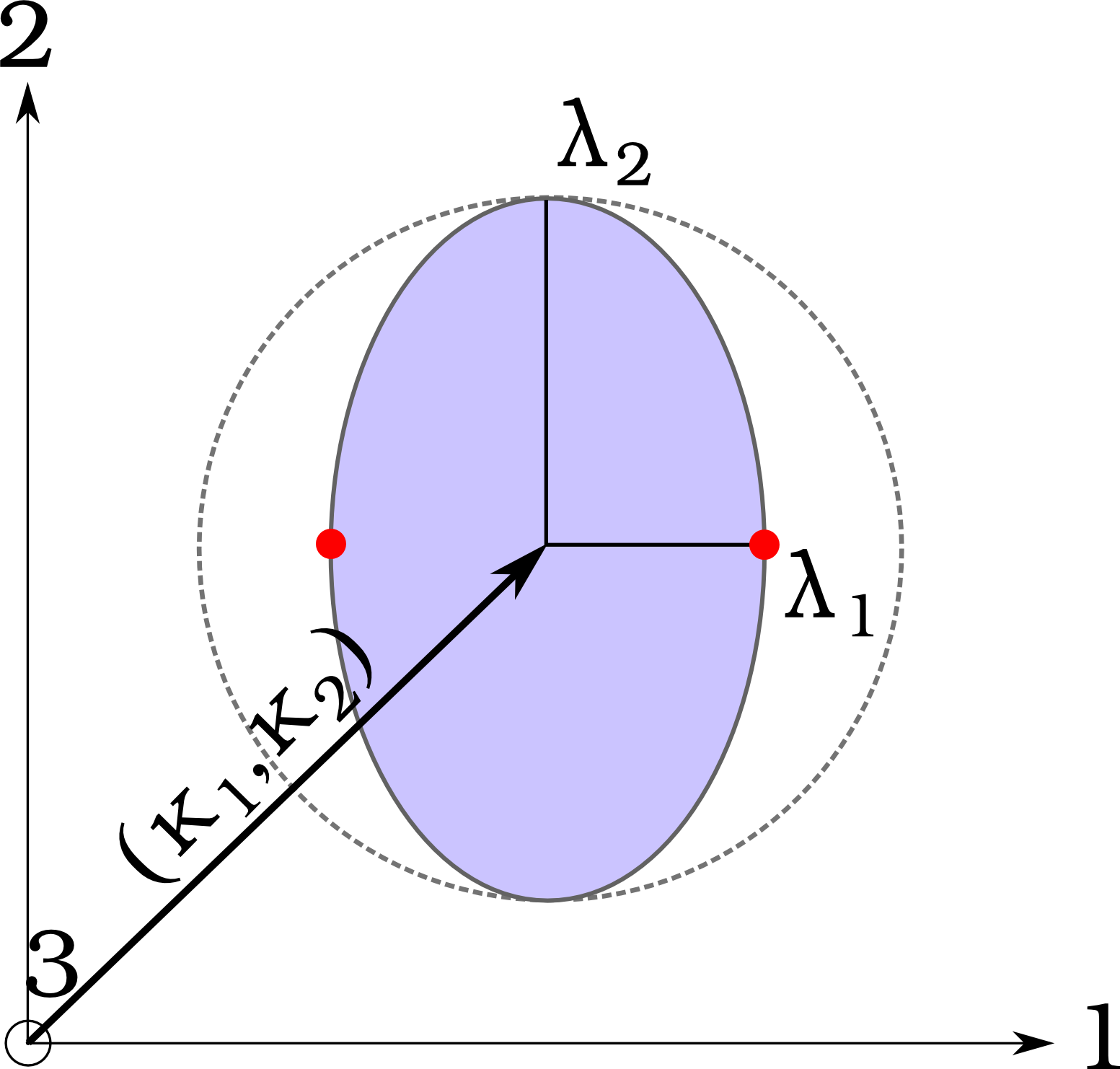}\quad\includegraphics[width=0.4\columnwidth]{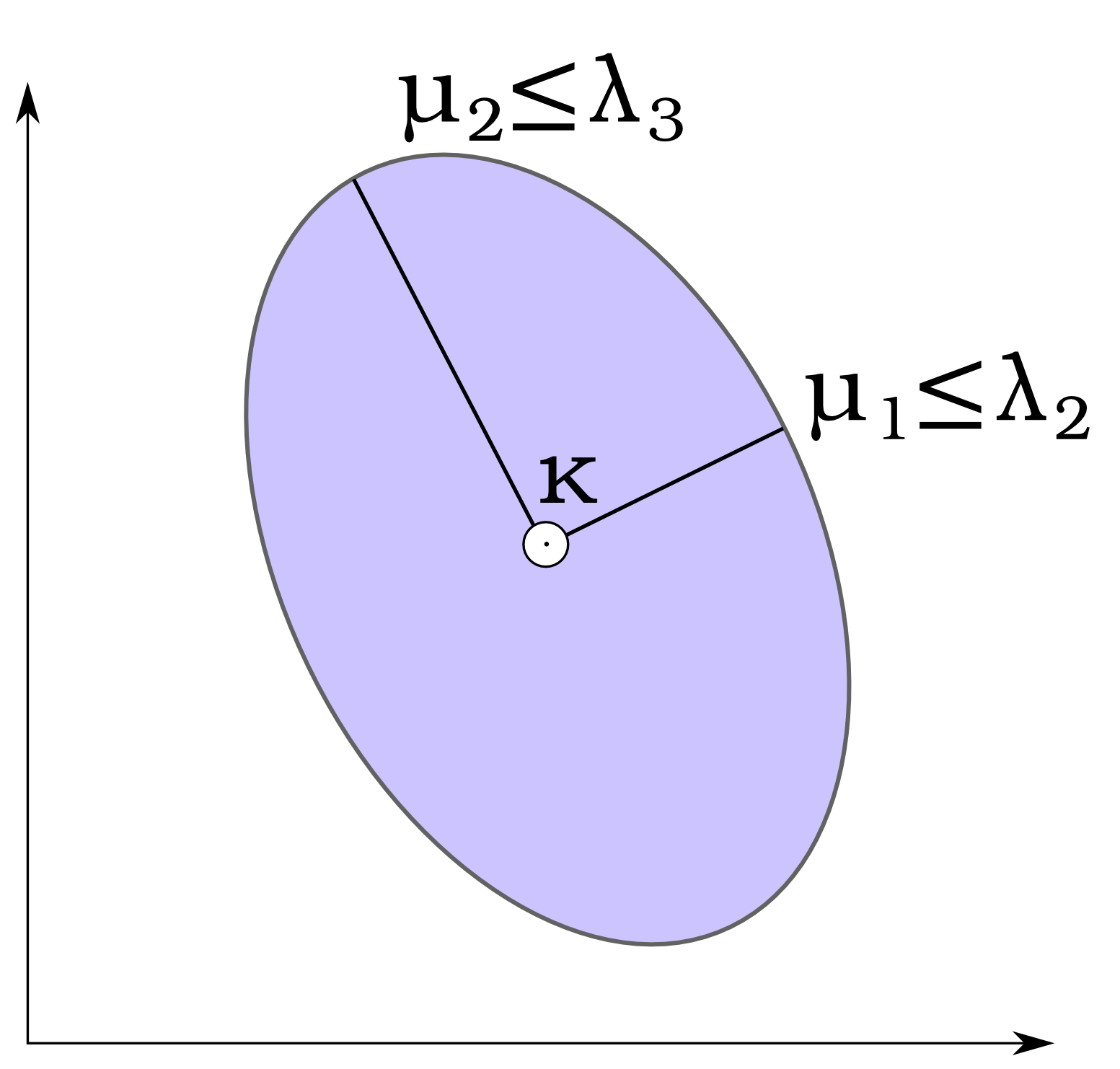}
 \caption{Left: Projection of the ellipsoid in the 1-2-direction to obtain upper bounds on the measures. Red dots indicate the points of the image of maximally coherent states in some basis which touch the boundary of the projected ellipse.
 Right: Projection of the ellipsoid in the direction of $\vec{\kappa}$. The semiaxes of the projection are bounded by the semiaxes of the ellipsoid.}
 \label{fig:qubitmaps2}
\end{figure}

\begin{lemma} \label{Lemma:upperbounds}
Let $\mathcal{M}$ be a single-qubit channel defined by displacement vector $\vec{\kappa}$ and transformation matrix $\Lambda$ with singular values $\lambda_1 \leq \lambda_2 \leq \lambda_3$. 
Let $\vec{\kappa} = (\kappa_1,\kappa_2,\kappa_3)^T$ in the bases where $\Lambda = \operatorname{diag}(\lambda_1, \lambda_2, \lambda_3)$. Then, $Q_-(\mathcal{M}) \leq \min(\sqrt{\kappa_1^2 + \kappa_2^2} + \lambda_1, \lambda_2)$ and $Q_0(\mathcal{M}) \leq Q_+(\mathcal{M}) \leq \min(\sqrt{\kappa_1^2 + \kappa_2^2} + \lambda_2, \lambda_3)$.
\end{lemma}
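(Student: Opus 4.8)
The plan is to reduce both quantities to elementary plane geometry of the shadows of the image ellipsoid, and then simply exhibit convenient choices for the bases over which $Q_+$ and $Q_-$ minimise. First I would record the dictionary already set up above: a basis corresponds to a unit axis $\vec{u}\in\mathds{R}^3$, and $C_U(\rho)$ equals the Euclidean distance of the Bloch vector $\vec{v}$ from the line $\mathds{R}\vec{u}$; writing $P_{\vec{u}}$ for the orthogonal projection onto the plane $\vec{u}^{\perp}$, this distance is $\|P_{\vec{u}}\vec{v}\|$. With $\mathcal{E}=\{\Lambda\vec{n}+\vec{\kappa}:|\vec{n}|\le1\}$ the image of the Bloch ball, linearity of $\mathcal{M}$ (so that pure states map to $\partial\mathcal{E}$) together with convexity of $C_U$ gives
\begin{align}
\max_{\ket{\psi}} C_U[\mathcal{M}(\ket{\psi})] = \max_{\vec{x}\in P_{\vec{u}}\mathcal{E}} \|\vec{x}\| ,
\end{align}
the largest distance from the origin to the shadow of $\mathcal{E}$ in direction $\vec{u}$ (cf.\ Fig.~\ref{fig:qubitmaps2}), while for the great circle $S_{\vec{u}_0}=\{\,\vec{n}:|\vec{n}|=1,\ \vec{n}\perp\vec{u}_0\,\}$ of maximally coherent states of the basis with axis $\vec{u}_0$,
\begin{align}
\max_{\vec{\alpha}} C_{U'}[\mathcal{M}(\ket{\Psi_U^{\vec{\alpha}}})] = \max_{\vec{x}\in P_{\vec{u}'}(\Lambda S_{\vec{u}_0}+\vec{\kappa})} \|\vec{x}\| .
\end{align}
Since $Q_+$ minimises over $\vec{u}$ and $Q_-$ over the pair $(\vec{u}_0,\vec{u}')$, any fixed choice of axes is an upper bound, and I would only ever use the principal axes $\vec{e}_1,\vec{e}_2,\vec{e}_3$ of the frame in which $\Lambda=\operatorname{diag}(\lambda_1,\lambda_2,\lambda_3)$ together with the axis $\hat{\kappa}:=\vec{\kappa}/\|\vec{\kappa}\|$ (replaced by $\vec{e}_3$ when $\vec{\kappa}=\vec{0}$). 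The one estimate used repeatedly is that an orthogonal projection is $1$-Lipschitz, hence maps a ball of radius $r$ about $\vec{c}$ into the disk of radius $r$ about $P\vec{c}$; so each shadow sits inside a disk of radius $r$ centred at distance $d$ from the origin, and the bound is $d+r$.

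For $Q_+$ I would use two projections. Projecting $\mathcal{E}$ along $\vec{e}_3$ gives the ellipse $\{(\lambda_1 n_1+\kappa_1,\lambda_2 n_2+\kappa_2):n_1^2+n_2^2\le1\}$, which sits in the disk of radius $\lambda_2$ about $(\kappa_1,\kappa_2)$, yielding $\sqrt{\kappa_1^2+\kappa_2^2}+\lambda_2$. Projecting instead along $\hat{\kappa}$: $\mathcal{E}$ lies in the ball of radius $\lambda_3$ about $\vec{\kappa}$ and $P_{\hat\kappa}\vec{\kappa}=\vec{0}$, so the shadow lies in the disk of radius $\lambda_3$ about the origin, yielding $\lambda_3$ (when $\vec{\kappa}=\vec{0}$, projecting along $\vec{e}_3$ already gives $\lambda_2\le\lambda_3$). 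Together with $Q_0\le Q_+$ this establishes $Q_0\le Q_+\le\min(\sqrt{\kappa_1^2+\kappa_2^2}+\lambda_2,\lambda_3)$.

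For $Q_-$ I would again use two choices. Take the maximally coherent states of the basis with axis $\vec{e}_2$ and read off coherence along $\vec{e}_3$: the image circle is $\vec{n}=(\cos\theta,0,\sin\theta)\mapsto(\lambda_1\cos\theta+\kappa_1,\ \kappa_2,\ \lambda_3\sin\theta+\kappa_3)$, whose $\vec{e}_3$-shadow is the segment $\{(\lambda_1\cos\theta+\kappa_1,\kappa_2):\theta\}$, lying in the disk of radius $\lambda_1$ about $(\kappa_1,\kappa_2)$; this gives $\sqrt{\kappa_1^2+\kappa_2^2}+\lambda_1$. Alternatively take the maximally coherent states of the basis with axis $\vec{e}_3$ and read off coherence along $\hat{\kappa}$: the image circle $(\lambda_1\cos\theta+\kappa_1,\ \lambda_2\sin\theta+\kappa_2,\ \kappa_3)$ lies in the ball of radius $\lambda_2$ about $\vec{\kappa}$ because $\|(\lambda_1\cos\theta,\lambda_2\sin\theta,0)\|\le\lambda_2$, so its $\hat{\kappa}$-shadow lies in the disk of radius $\lambda_2$ about the origin; this gives $\lambda_2$ (when $\vec{\kappa}=\vec{0}$, read off along $\vec{e}_3$). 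Hence $Q_-\le\min(\sqrt{\kappa_1^2+\kappa_2^2}+\lambda_1,\lambda_2)$.

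Most of this is bookkeeping; the step I expect to need the most care is not computational but conceptual, namely choosing the axis $\hat{\kappa}$ along which coherence is read off, which recentres the relevant shadow at the origin and thereby removes the displacement from the two bounds $\lambda_3$ and $\lambda_2$. Beyond that, the only genuine case split is the degenerate $\vec{\kappa}=\vec{0}$ (where $\hat{\kappa}$ is undefined and one reverts to $\vec{e}_3$), and in each of the four estimates one must check that the chosen projection direction really forgets the coordinate carrying the larger semiaxis, so that the disk radius is $\lambda_1$ or $\lambda_2$ rather than $\lambda_3$.
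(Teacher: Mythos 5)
Your proof is correct and follows essentially the same route as the paper: replace the minimisation over bases by the two axes $\hat\kappa$ and $\vec{e}_3$, bound each resulting shadow by a displaced disk, and for $Q_-$ additionally pick a convenient input great circle for each axis. Your explicit parametrisation even clarifies the one point the paper states loosely, namely that the bound $\sqrt{\kappa_1^2+\kappa_2^2}+\lambda_1$ comes from the image of the great circle in the $\lambda_1$--$\lambda_3$ plane projected along $\vec{e}_3$.
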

\begin{proof}
Instead of minimizing over all bases, we restrict the minimization to a discrete set to obtain an upper bound. For both $Q_-(\mathcal{M})$ and $Q_+(\mathcal{M})$, we consider the axes along $\vec{\kappa}$ and along the largest singular value of $\Lambda$.

To obtain an upper bound on $Q_+(\mathcal{M})$, we simply take into account all states on the surface of the ellipsoid. The largest possible distance to the axis along $\vec{\kappa}$ clearly is $\lambda_3$ since the axis goes through the center of the ellipsoid (see Fig.~\ref{fig:qubitmaps2}, right). Similarly, the distance from the axis along $\lambda_3$ is the distance to the center, which is given by $\sqrt{\kappa_1^2 + \kappa_2^2}$, plus at most $\lambda_2$ since the axis is parallel to $\lambda_3$ (see Fig.~\ref{fig:qubitmaps2}, left). Because of the minimization over all bases, an upper bound is then given by $\min(\sqrt{\kappa_1^2 + \kappa_2^2} + \lambda_2, \lambda_3)$.

In the case of $Q_-$, we can additionally choose the set of maximally coherent states. Since the channel $\mathcal{M}$ corresponds to an affine transformation of the Bloch vector, any ellipse on the surface of the ellipsoid with the same center as the ellipsoid is the image of a great circle on the surface of the Bloch sphere. Each of these circles is the set of maximally coherent states with respect to some basis. Hence, we can choose any ellipse on the surface of the ellipsoid and determine the maximal distance to the chosen axis to obtain an upper bound. For the axis along $\vec{\kappa}$, we choose the ellipse with semiaxes $\lambda_1$ and $\lambda_2$. Then, the maximal distance is at most $\lambda_2$ since the axis goes through the center of the ellipse. In the case of the axis along $\lambda_3$, the ellipse with semiaxes $\lambda_1$ and $\lambda_2$ limits the maximal distance to $\sqrt{\kappa_1^2 + \kappa_2^2} + \lambda_1$ (see Fig.~\ref{fig:qubitmaps2}, left). Again, the minimum of the cases 
considered gives an upper bound on $Q_-(\mathcal{M})$.
\end{proof}

One can also find lower bounds on the quantities, which will later be useful for applications.
\begin{lemma} \label{Lemma:lowerbounds}
Let $\mathcal{M}$ be a single-qubit channel defined by displacement vector $\vec{\kappa}$ and transformation matrix $\Lambda$ with singular values $\lambda_1 \leq \lambda_2 \leq \lambda_3$.
Then, $Q_0(\mathcal{M}) \geq Q_-(\mathcal{M}) \geq \lambda_1$ and $Q_+(\mathcal{M}) \geq \lambda_2$.
If $\mathcal{M}$ is unital ($\vec{\kappa} = 0$), equality holds for $Q_\pm$.
\end{lemma}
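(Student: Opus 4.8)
The plan is to argue geometrically in the Bloch picture, exactly as in the discussion preceding Lemma~\ref{Lemma:upperbounds}. Since $Q_\pm$ are invariant under pre- and postcomposition with unitary channels and since $Q_0(\mathcal M)\ge Q_-(\mathcal M)$, it suffices to bound $Q_\pm$, and we may choose the coordinate axes of $\mathds{R}^3$ along the semiaxes of the image ellipsoid, so that $\mathcal E=\{\vec\kappa+Au : |u|\le 1\}$ with $A:=\operatorname{diag}(\lambda_1,\lambda_2,\lambda_3)$. For the basis fixed by a unit Bloch vector $\hat n$, the value of $C$ of a state equals the Euclidean distance of its Bloch vector from the axis $\ell_{\hat n}=\{s\hat n : s\in\mathds{R}\}$ through the origin. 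Hence $Q_+(\mathcal M)=\min_{\hat n}\max_{p\in\mathcal E}\operatorname{dist}(p,\ell_{\hat n})$, while $Q_-(\mathcal M)=\min_{\hat n,\hat m}\max_{p\in E_{\hat m}}\operatorname{dist}(p,\ell_{\hat n})$, where $E_{\hat m}=\{\vec\kappa+Aw : |w|=1,\ w\cdot\hat m=0\}$ is the central section of $\mathcal E$ that is the image of the great circle of maximally coherent states for the basis $\hat m$.

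The first step is to discard the displacement. Writing $\operatorname{dist}(p,\ell_{\hat n})^2=p^T(\one-\hat n\hat n^T)p$ with $\one-\hat n\hat n^T\succeq 0$, for any vector $q$ one gets $\operatorname{dist}(\vec\kappa+q,\ell_{\hat n})^2+\operatorname{dist}(\vec\kappa-q,\ell_{\hat n})^2\ge 2\,q^T(\one-\hat n\hat n^T)q$, so that $\max\{\operatorname{dist}(\vec\kappa\pm q,\ell_{\hat n})^2\}\ge |q|^2-(\hat n\cdot q)^2$. Since $\mathcal E$ and every $E_{\hat m}$ are symmetric about $\vec\kappa$, both $\vec\kappa+q$ and $\vec\kappa-q$ lie in the relevant set whenever $q=Au$ with $|u|\le 1$ (for $\mathcal E$), respectively $q=Aw$ with $|w|=1$, $w\cdot\hat m=0$ (for $E_{\hat m}$). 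Therefore $\max_{p\in\mathcal E}\operatorname{dist}(p,\ell_{\hat n})^2\ge\max_{|u|\le 1}\bigl(|Au|^2-(\hat n\cdot Au)^2\bigr)$, and likewise for $E_{\hat m}$ with the maximization restricted to $w\perp\hat m$.

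The core of the argument is then a short linear-algebra estimate of $|Au|^2-(\hat n\cdot Au)^2=u^T A(\one-\hat n\hat n^T)A\,u$, where we use that $A$ is symmetric, so $\hat n\cdot Au=(A\hat n)\cdot u$. For $Q_+$: let $V$ be the two-dimensional span of eigenvectors of $A^2$ for its two largest eigenvalues, so that $x^TA^2x\ge\lambda_2^2|x|^2$ for all $x\in V$; then $V\cap(A\hat n)^\perp$ has dimension at least $2+2-3=1$, so we may pick a unit vector $w$ in it, and since $w\perp A\hat n$ the correction term vanishes, giving $|Aw|^2-(\hat n\cdot Aw)^2=w^TA^2w\ge\lambda_2^2$. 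As this holds for every $\hat n$, we conclude $Q_+(\mathcal M)\ge\lambda_2$. For $Q_-$: the two-dimensional subspaces $\hat m^\perp$ and $(A\hat n)^\perp$ intersect in dimension at least $1$, so there is a unit $w$ with $w\perp\hat m$ and $w\perp A\hat n$, which yields $|Aw|^2-(\hat n\cdot Aw)^2=w^TA^2w\ge\lambda_1^2$ for all $\hat m,\hat n$; hence $Q_-(\mathcal M)\ge\lambda_1$, and therefore $Q_0(\mathcal M)\ge Q_-(\mathcal M)\ge\lambda_1$.

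Finally, for the equalities in the unital case $\vec\kappa=0$ I would simply combine these lower bounds with the upper bounds of Lemma~\ref{Lemma:upperbounds}, which for $\vec\kappa=0$ read $Q_-(\mathcal M)\le\min(\lambda_1,\lambda_2)=\lambda_1$ and $Q_+(\mathcal M)\le\min(\lambda_2,\lambda_3)=\lambda_2$. I expect the only genuine subtlety to be the third step: one should not try to bound the largest eigenvalue of $A(\one-\hat n\hat n^T)A$ head-on, but instead restrict to a well-chosen subspace that both annihilates the rank-one correction and, by dimension counting in $\mathds{R}^3$, still carries a large enough Rayleigh quotient of $A^2$; the center-symmetry reduction that removes $\vec\kappa$ is the other point needing a little care, though it is routine once set up.
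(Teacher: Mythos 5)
Your proof is correct and follows essentially the same geometric route as the paper's: identify the qubit coherence with the Euclidean distance from the incoherent axis in the Bloch picture, discard the displacement $\vec\kappa$ by central symmetry of the image, and lower-bound what remains by the ordered semiaxes. The only differences are that your subspace dimension count rigorously establishes the interlacing fact that the paper merely asserts (namely that the projection of the ellipsoid onto any plane has semiaxes at least $\lambda_1$ and $\lambda_2$), and that you obtain the unital equalities by invoking Lemma~\ref{Lemma:upperbounds} instead of exhibiting the optimal coherence direction along $\lambda_3$ directly; both are valid.
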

\begin{proof}
In order to find lower bounds, we have to show the bound in all coherence bases.

For $Q_+$, we have to consider -- for every coherence basis -- the maximal distance to the center of the projection of the ellipsoid onto the plane perpendicular to the coherence direction. This projection is an ellipse with semiaxes $\mu_1 \geq \lambda_1$ and $\mu_2 \geq \lambda_2$, displaced by some vector from the center. If the displacement is 0, the maximal distance is given by $\mu_2$ and therefore at least $\lambda_2$. For nonvanishing displacement, the maximal distance can only increase, yielding the lower bound for $Q_+$.

For $Q_-(\mathcal{M})$, we additionally have to minimize the maximal distance to the axis of two opposite points on this ellipse, due to the additional minimization over the input coherent states. This is in any case larger than $\mu_1$ and therefore larger than $\lambda_1$.

Finally, if the channel is unital, note that the minimum over the coherence bases is attained in the direction of $\lambda_3$, where for $Q_-(\mathcal{M})$, we consider the states mapped to an ellipse along the $\lambda_1$-$\lambda_3$-axes, giving a maximum distance of $\lambda_1$. For $Q_+(\mathcal{M})$, the maximum distance of the non-displaced ellipsoid in this basis is given by $\lambda_2$.
\end{proof}

The upper bounds on the quality measures can be used to obtain tight bounds for M\&P qubit channels.

\begin{figure}[t]
 \centering
 \includegraphics[width=0.49\columnwidth]{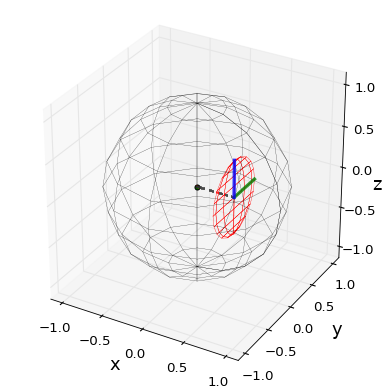} \includegraphics[width=0.49\columnwidth]{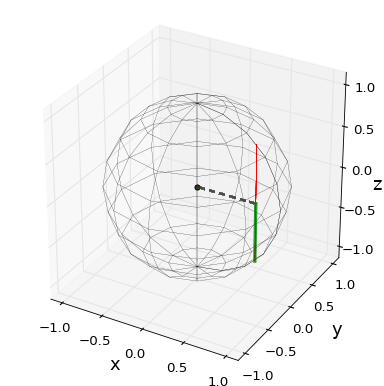} 
 \caption{Ellipsoid representations of the M\&P channels that maximize the different quality measures. The displacement vector $\vec{\kappa}$ is depicted by the dotted black line; the semiaxes, by the blue and green lines. Left: The M\&P channel maximizing $Q_-$ maps to a disk of radius $\frac1{\sqrt5}$, displaced by $\frac1{\sqrt5}$. Right: The M\&P channel maximizing $Q_0$ and $Q_+$, mapping to a straight line of length $\frac2{\sqrt2}$, displaced by $\frac1{\sqrt2}$.}
 \label{fig:sqrt5channel}
\end{figure}

\begin{theorem}\label{theorem:qubitbounds}
Let $\mathcal{M}$ be a single-qubit M\&P channel. Then, it holds that
\begin{align}
Q_0(\mathcal{M}) \leq Q_+(\mathcal{M}) \leq \frac{1}{\sqrt{2}}
\end{align}
and $Q_-(\mathcal{M}) \leq \frac{1}{\sqrt{5}}$. Additionally, if $\mathcal{M}$ is unital ($\vec{\kappa} = 0$),
\begin{align}
Q_0(\mathcal{M})\leq Q_+(\mathcal{M}) \leq \frac{1}{2}
\end{align}
and $Q_-(\mathcal{M}) \leq \frac{1}{3}$. All of these bounds are tight.
\end{theorem}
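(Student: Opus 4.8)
The plan is to work entirely in the Bloch picture and to combine the upper bounds of Lemma~\ref{Lemma:upperbounds} with the characterisation of M\&P (entanglement--breaking) qubit channels through the separability of their Choi state. First I would bring $\mathcal{M}$ to canonical form, with $\Lambda=\operatorname{diag}(\lambda_1,\lambda_2,\lambda_3)$, $0\le\lambda_1\le\lambda_2\le\lambda_3$, and displacement $\vec{\kappa}=(\kappa_1,\kappa_2,\kappa_3)$ in that basis; since $Q_0\le Q_+$ and all three quantities are invariant under composing $\mathcal{M}$ with unitary channels, this is without loss of generality, and $\mathcal{M}$ stays M\&P. By Lemma~\ref{Lemma:upperbounds} it then suffices to bound $\min\!\bigl(\sqrt{\kappa_1^2+\kappa_2^2}+\lambda_2,\ \lambda_3\bigr)$ (for $Q_0\le Q_+$) and $\min\!\bigl(\sqrt{\kappa_1^2+\kappa_2^2}+\lambda_1,\ \lambda_2\bigr)$ (for $Q_-$) over all single-qubit M\&P channels.

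For the constraint I would use that a qubit channel is M\&P iff its Choi state is separable, which---being a two-qubit state---is equivalent by the Peres--Horodecki criterion to positivity of its partial transpose, and geometrically to the image ellipsoid (centre $\vec{\kappa}$, semiaxes $\lambda_i$) being contained in some tetrahedron inscribed in the Bloch ball. In the unital case $\vec{\kappa}=\vec{0}$ this collapses: after the canonical rotations $\mathcal{M}$ is a Pauli channel whose Choi state is Bell-diagonal with eigenvalues $\tfrac14(1\pm\lambda_1\pm\lambda_2\pm\lambda_3)$, and separability (all eigenvalues $\le\tfrac12$) reduces to $\lambda_1+\lambda_2+\lambda_3\le1$ because $0\le\lambda_1\le\lambda_2\le\lambda_3$. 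Maximising $\lambda_2$, respectively $\lambda_1$, under $0\le\lambda_1\le\lambda_2\le\lambda_3$ and $\sum_i\lambda_i\le1$ gives $\lambda_2\le\tfrac12$ (attained at $\lambda_1=0$, $\lambda_2=\lambda_3=\tfrac12$) and $\lambda_1\le\tfrac13$ (attained at $\lambda_1=\lambda_2=\lambda_3=\tfrac13$); together with Lemmas~\ref{Lemma:upperbounds} and~\ref{Lemma:lowerbounds} (the latter giving equality $Q_+=\lambda_2$, $Q_-=\lambda_1$ for unital channels) this yields $Q_0\le Q_+\le\tfrac12$ and $Q_-\le\tfrac13$. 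For the general (non-unital) case the same scheme applies, but the optimisation now has more free parameters; here I would first argue that the relevant maximum is attained with $\lambda_1=0$ and $\vec{\kappa}$ along the short axis (shrinking $\lambda_1$ and aligning $\vec{\kappa}$ with the $\lambda_1$-axis can only relax the nestedness/PPT constraint while not decreasing the relevant $\min(\cdot,\cdot)$), reducing to a low-dimensional problem in $(\lambda_2,\lambda_3,|\vec{\kappa}|)$ that solves to $|\vec{\kappa}|+\lambda_2\le\tfrac{1}{\sqrt2}$ with $\lambda_3\le\tfrac{1}{\sqrt2}$ at the optimum for $Q_0\le Q_+$, and to the analogous $\tfrac{1}{\sqrt5}$-bound for $Q_-$.

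For tightness I would exhibit explicit extremal M\&P channels and verify their values directly. For the $\tfrac{1}{\sqrt2}$ bound take the channel ``measure $\sigma_z$, then on outcome $\pm$ prepare the pure state $\tfrac12\bigl(\one+\tfrac{1}{\sqrt2}\sigma_x\pm\tfrac{1}{\sqrt2}\sigma_z\bigr)$'', whose Bloch action is $(v_x,v_y,v_z)\mapsto(\tfrac{1}{\sqrt2},0,\tfrac{1}{\sqrt2}v_z)$, i.e.\ $\vec{\lambda}=(0,0,\tfrac{1}{\sqrt2})$, $\vec{\kappa}=(\tfrac{1}{\sqrt2},0,0)$; the preparations being pure, this is genuinely M\&P. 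For the $\tfrac{1}{\sqrt5}$ bound take ``with probability $\tfrac12$ each, measure $\sigma_y$, respectively $\sigma_z$, and on outcome $\pm$ prepare the pure state $\tfrac12\bigl(\one+\tfrac{1}{\sqrt5}\sigma_x\pm\tfrac{2}{\sqrt5}\sigma_y\bigr)$, respectively $\tfrac12\bigl(\one+\tfrac{1}{\sqrt5}\sigma_x\pm\tfrac{2}{\sqrt5}\sigma_z\bigr)$'', with Bloch action $(v_x,v_y,v_z)\mapsto(\tfrac{1}{\sqrt5},\tfrac{1}{\sqrt5}v_y,\tfrac{1}{\sqrt5}v_z)$, a disc of radius $\tfrac{1}{\sqrt5}$ displaced by $\tfrac{1}{\sqrt5}$. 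For both channels the upper bounds of Lemma~\ref{Lemma:upperbounds} already give $Q_+\le\tfrac{1}{\sqrt2}$ and $Q_-\le\tfrac{1}{\sqrt5}$, and the reverse inequalities---as well as $Q_0=\tfrac{1}{\sqrt2}$ for the first channel---follow from a short direct evaluation of the $\min$--$\max$ over Bloch axes, which is explicit because the image is a segment, respectively a disc, so the image of every great circle and every orthogonal projection is immediate. In the unital case the extremal channels $\vec{\lambda}=(0,\tfrac12,\tfrac12)$ and $\vec{\lambda}=(\tfrac13,\tfrac13,\tfrac13)$ are Pauli channels on the separability boundary, Lemma~\ref{Lemma:lowerbounds} supplies the matching lower bounds on $Q_\pm$, and $Q_0=\tfrac12$ follows from the same type of direct computation.

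I expect the non-unital optimisation to be the main obstacle: one needs a workable form of the nestedness (equivalently PPT) constraint and then has to justify the reduction to $\lambda_1=0$ with $\vec{\kappa}$ along the short axis, after which the remaining calculus is routine. A secondary nuisance is that off the unital locus $Q_0$ is not controlled by Lemma~\ref{Lemma:lowerbounds}, so the matching lower bounds for $Q_0$ rely on the small but slightly fiddly direct computations indicated above.
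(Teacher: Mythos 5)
Your overall route is the paper's: reduce to the canonical Bloch form, feed the M\&P constraint (separability of the two-qubit Choi state, which the paper encodes as explicit polynomial inequalities equivalent to your PPT/nestedness condition) into the upper bounds of Lemma~\ref{Lemma:upperbounds}, argue that $\vec{\kappa}$ may be taken along the first axis, optimize, and certify tightness with the same two extremal channels $\vec{\lambda}=(0,0,\tfrac{1}{\sqrt2})$, $\vec{\kappa}=(\tfrac{1}{\sqrt2},0,0)$ and $\vec{\lambda}=(0,\tfrac{1}{\sqrt5},\tfrac{1}{\sqrt5})$, $\vec{\kappa}=(\tfrac{1}{\sqrt5},0,0)$; the unital case via $\sum_i\lambda_i\le 1$ is handled identically in both.

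The one step that would fail as written is the reduction to $\lambda_1=0$ in the non-unital $Q_-$ optimization. You justify it by claiming that shrinking $\lambda_1$ ``does not decrease the relevant $\min$,'' but the $Q_-$ bound from Lemma~\ref{Lemma:upperbounds} is $\min\bigl(\sqrt{\kappa_1^2+\kappa_2^2}+\lambda_1,\ \lambda_2\bigr)$, which manifestly decreases when $\lambda_1$ shrinks; restricting to $\lambda_1=0$ before maximizing therefore yields a priori only a lower bound on the maximum over M\&P channels, not the upper bound the theorem requires. (For $Q_+$ the reduction is harmless, since $\lambda_1$ does not appear in $\min(\sqrt{\kappa_1^2+\kappa_2^2}+\lambda_2,\lambda_3)$.) The paper avoids this by keeping all of $\lambda_1,\lambda_2,\lambda_3,|\vec{\kappa}|$ as variables after aligning $\vec{\kappa}$ with the first axis --- a reduction it justifies exactly as you do, by noting the constraints only relax and the bound only grows --- and then carrying out the constrained maximization directly, the eigenvalues of the Choi matrix and of $\tfrac12\one-\rho_{\mathcal{M}}$ being available in closed form for that family. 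Your tightness verifications are sound, including the direct min--max computation giving $Q_0(\mathcal{M}_+)=Q_+(\mathcal{M}_+)=\tfrac{1}{\sqrt2}$, which is indeed needed because the lower bounds of Lemma~\ref{Lemma:lowerbounds} are vacuous for that degenerate (non-unital, $\lambda_2=0$) channel.
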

\begin{proof}
Let $\mathcal{M}$ be defined by displacement vector $\vec{\kappa}$ and transformation matrix $\Lambda$ with singular values $\lambda_1 \leq \lambda_2 \leq \lambda_3$. Since we only consider $Q_+$ and $Q_-$, we can assume without loss of generality~that $\Lambda = \operatorname{diag}(\lambda_1, \lambda_2, \lambda_3)$. Let $\vec\lambda = (\lambda_1, \lambda_2, \lambda_3)^T$.
Complete positivity of a single-qubit channel $\mathcal{M}$ is equivalent to $\rho_\mathcal{M} \geq 0$ where $\rho_\mathcal{M}$ is the Choi matrix of $\mathcal{M}$ \cite{geometrychan, pillis1967linear, jamiolkowski1972linear, choi1975completely}.
Using Descartes's rule of signs \cite{descartes} on the characteristic polynomial of the Choi matrix $\rho_\mathcal{M}$, complete positivity of the channel is equivalent to the set of inequalities
\begin{align}
|\vec{\kappa}|^2 + |\vec{\lambda}|^2 &\leq 3,\label{positivity1}\\ 
|\vec{\kappa}|^2 + |\vec{\lambda}|^2 - 2\lambda_1 \lambda_2 \lambda_3 &\leq 1,\\
\begin{split}
(1-|\vec{\kappa}|^2)^2 - 2(1 - |\vec{\kappa}|^2) |\vec{\lambda}|^2 -\frac12 |\vec{\lambda}|^4 \\ 
+ 8\lambda_1 \lambda_2 \lambda_3 +\frac12\sum_i D_i^2 -4\vec{K}\cdot\vec{L}& \geq0,\label{positivity3}
\end{split}
\end{align}
where $D_i = \sum_{j=1}^3 (-1)^{\delta_{ij}} \lambda_i^2$, $\vec{K}=(\kappa_1^2,\kappa_2^2,\kappa_3^2)^T$ and $\vec{L} = (\lambda_1^2,\lambda_2^2,\lambda_3^2)^T$.
Similarly, single-qubit channels are M\&P channels if and only if $\frac{1}{2} \one - \rho_\mathcal{M}$ is positive semidefinite \cite{geometrychan}. This yields the same set of equations with $\lambda_i \leftrightarrow -\lambda_i$.
In the following, we apply these restrictions to Lemma \ref{Lemma:upperbounds}. Clearly, the bounds from Lemma \ref{Lemma:upperbounds} only become worse if $\vec{\kappa}$ is rotated such that $\vec{\kappa} = (|\vec{\kappa}|, 0, 0)^T$. However, rotating a M\&P channel in such a way always leads to another M\&P channel as can be seen from Eqs.~(\ref{positivity1}) to (\ref{positivity3}). Thus, we can restrict ourselves to this type of channel.
For these channels, the eigenvalues can be evaluated analytically and maximization of the bounds over these channels for $Q_-$ results in the channel
\begin{align}
 \mathcal{M}_-(\rho) = \frac12\left[\one + \frac1{\sqrt5}\left(\sigma_x + \trace(\rho \sigma_y)\sigma_y + \trace(\rho \sigma_z)\sigma_z\right)\right].
\end{align}
It is visualized in the Bloch picture in Fig.~\ref{fig:sqrt5channel} and has the quality of $Q_-(\mathcal{M}_-) = \frac1{\sqrt5}$.
For $Q_+$, the optimization of the bounds over the channels yields
\begin{align}
 \mathcal{M}_+(\rho) = \frac12\left[\one + \frac1{\sqrt2}(\sigma_x + \trace(\rho \sigma_z)\sigma_z)\right],
\end{align}
with $Q_0(\mathcal{M}_+) = Q_+(\mathcal{M}_+) = \frac1{\sqrt2}$. The channel is visualized in Fig.~\ref{fig:sqrt5channel}.

For unital channels, i.e.,~$\vec{\kappa} = 0$, the condition for separability reads $\sum_i |\lambda_i| \leq 1$ \cite{entbrchan}. Maximizing under this constraint yields for $Q_-$ the depolarizing channel
\begin{align}
 \mathcal{M}_-^\prime(\rho) = \frac13\rho + \frac13 \one
\end{align}
with $Q_-(\mathcal{M}_-^\prime) = \frac13$.
For $Q_0$ and $Q_+$, we obtain the planar channel
\begin{align}
 \mathcal{M}_+^\prime(\rho) = \frac12\left[\one + \frac12(\trace(\rho \sigma_y)\sigma_y + \trace(\rho \sigma_z)\sigma_z)\right]
\end{align}
with $Q_0(\mathcal{M}_+^\prime) = Q_+(\mathcal{M}_+^\prime) = \frac12$.
\end{proof}

Finally, we have a statement similar to Lemma~\ref{lemma:prepqplus} for $Q_-$ if the channel is unital:
\begin{lemma}\label{lemma:singlequbitunitalprep}
Let $\mathcal{M}$ and $\mathcal{N}$ be unital channels acting on single qubits ($D=2$). Then, it holds that $Q_-(\mathcal{M}\circ \mathcal{N}) \leq Q_-(\mathcal{M})$.
\end{lemma}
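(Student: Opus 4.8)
The plan is to reduce the statement to an elementary inequality for the singular values of the Bloch representation matrices. First I would note that the composition of two unital channels is again unital: from $\mathcal{N}(\one/2)=\one/2$ and $\mathcal{M}(\one/2)=\one/2$ one gets $(\mathcal{M}\circ\mathcal{N})(\one/2)=\one/2$. Hence all three channels $\mathcal{N}$, $\mathcal{M}$ and $\mathcal{M}\circ\mathcal{N}$ act on Bloch vectors by linear maps with vanishing displacement, $\vec v\mapsto\Lambda_{\mathcal{N}}\vec v$, $\vec v\mapsto\Lambda_{\mathcal{M}}\vec v$ and $\vec v\mapsto\Lambda_{\mathcal{M}}\Lambda_{\mathcal{N}}\vec v$, so in particular $\Lambda_{\mathcal{M}\circ\mathcal{N}}=\Lambda_{\mathcal{M}}\Lambda_{\mathcal{N}}$. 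By Lemma~\ref{Lemma:lowerbounds}, the value of $Q_-$ on a unital single-qubit channel equals the smallest singular value of its Bloch matrix, so it suffices to prove
\begin{align}
\sigma_{\min}(\Lambda_{\mathcal{M}}\Lambda_{\mathcal{N}})\le\sigma_{\min}(\Lambda_{\mathcal{M}}).
\end{align}

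Second, I would invoke the inequality $\sigma_{\min}(AB)\le\sigma_{\min}(A)\,\sigma_{\max}(B)$ for real square matrices. For invertible $B$ this follows by testing the quadratic form of $AB$ on the unit vector $x=B^{-1}v/\|B^{-1}v\|$, where $v$ is the right singular vector of $A$ with $\|Av\|=\sigma_{\min}(A)$, and using $\|B^{-1}v\|\ge 1/\sigma_{\max}(B)$; for singular $B$ the product $AB$ is singular and the left-hand side is zero. Applying this with $A=\Lambda_{\mathcal{M}}$ and $B=\Lambda_{\mathcal{N}}$, the claim reduces to $\sigma_{\max}(\Lambda_{\mathcal{N}})\le 1$. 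This last fact holds because $\mathcal{N}$, being unital, maps the Bloch ball into a centered ellipsoid contained in the Bloch ball, i.e.\ $\|\Lambda_{\mathcal{N}}\|_{\mathrm{op}}=\sigma_{\max}(\Lambda_{\mathcal{N}})\le 1$. Combining, $Q_-(\mathcal{M}\circ\mathcal{N})=\sigma_{\min}(\Lambda_{\mathcal{M}}\Lambda_{\mathcal{N}})\le\sigma_{\min}(\Lambda_{\mathcal{M}})\cdot 1=Q_-(\mathcal{M})$.

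The step I would be most careful about is the second one: stating and proving $\sigma_{\min}(AB)\le\sigma_{\min}(A)\,\sigma_{\max}(B)$ cleanly and making the reduction to it watertight. Everything else is either immediate (unitality of the composition, $\Lambda_{\mathcal{M}\circ\mathcal{N}}=\Lambda_{\mathcal{M}}\Lambda_{\mathcal{N}}$, $\|\Lambda_{\mathcal{N}}\|_{\mathrm{op}}\le 1$) or already available from Lemma~\ref{Lemma:lowerbounds}. I would also explicitly remark that no rotational freedom is lost: $Q_-$ is invariant under pre- and post-unitaries, so it depends on $\Lambda_{\mathcal{M}}$ and on $\Lambda_{\mathcal{M}}\Lambda_{\mathcal{N}}$ only through their singular values, which is exactly what the inequality controls, regardless of how the singular value decompositions of $\Lambda_{\mathcal{M}}$ and $\Lambda_{\mathcal{N}}$ happen to be aligned. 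Note finally that this argument is special to the unital case, since only there does $Q_-$ collapse to a single singular value; for non-unital channels the analogous identity fails and the displacement of the image ellipsoid would have to be tracked as well.
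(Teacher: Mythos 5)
Your proposal is correct and follows essentially the same route as the paper: reduce via Lemma~\ref{Lemma:lowerbounds} to $\sigma_{\min}(\Lambda_{\mathcal{M}}\Lambda_{\mathcal{N}})\le\sigma_{\min}(\Lambda_{\mathcal{M}})\,\sigma_{\max}(\Lambda_{\mathcal{N}})\le\sigma_{\min}(\Lambda_{\mathcal{M}})$, using that singular values of a channel's Bloch matrix are at most $1$. The only difference is cosmetic: the paper cites Theorem~3.3.16 of Horn and Johnson for the singular-value inequality, whereas you prove it inline with a correct elementary argument.
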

\begin{proof}
First, note that the composition of unital channels is again a unital channel.
As shown in Lemma \ref{Lemma:lowerbounds}, the quality measure $Q_-(\mathcal{M})$ for a unital channel $\mathcal{M}$ is given by the minimal singular value of the matrix $\Lambda_\mathcal{M}$, i.e.,~$\lambda_1(\Lambda_\mathcal{M})$.  With this, we have that 
\begin{align}Q_-(\mathcal{M}\circ \mathcal{N}) &= \lambda_1(\Lambda_{\mathcal{M}\circ \mathcal{N}}) \nonumber\\
&\leq  \lambda_1(\Lambda_\mathcal{M})\lambda_3(\Lambda_\mathcal{N})\nonumber \\
&\leq  \lambda_1(\Lambda_\mathcal{M})=Q_-(\mathcal{M}).
\end{align}
For the first inequality, we have used the fact that $\Lambda_{\mathcal{M}\circ \mathcal{N}}=\Lambda_{\mathcal{M}}\Lambda_{\mathcal{N}}$ and Theorem 3.3.16 from Ref.~\cite{HornJohnsonTopicsMatrixAnalysis}. The second inequality follows from the fact that for channels, all the singular values of the matrix $\Lambda$ have to be less than or equal to~$1$.
\end{proof}

To illustrate how the measures can be determined for specific single-qubit channels, we examine several well-known channels. 

\section{Examples of single-qubit channels}
In the following, we consider the phase-flip, the amplitude-damping and the depolarizing channel and derive their quality in terms of $Q_0$ and $Q_\pm$ .

{\it -- The phase-flip channel $\mathcal{P}$:}
The matrix $\Lambda$ for the unital (i.e.,~$\vec{\kappa}=0$) phase-flip channel $\mathcal{P}$, is given by $\operatorname{diag}(1-p,1-p,1)$ with $0\leq p\leq 1$. It can be realized by a M\&P scheme for $p=1$ only.
Using the result from Lemma~\ref{Lemma:lowerbounds} for unital channels, we have that $Q_-(\mathcal{P}) = Q_0(\mathcal{P}) = Q_+(\mathcal{P})= 1-p$.
It should be noted that any bit-flip or bit-phase-flip channel is related to a phase-flip channel with the same error probability $p$ via a transformation of the form $\mathcal{V} \circ \mathcal{P}\circ \mathcal{V}^{-1}$,  where $\mathcal{V}(\rho) = V \rho V^\dagger$ is a unitary channel. Hence, the quality measures $Q_\pm$ and $Q_0$ for these channels with the same error probability coincide. Note that $Q_-$ excludes unital M\&P schemes for $p<\frac23$, while $Q_+$ and $Q_0$ exclude them for $p<\frac12$.

{\it -- The amplitude-damping channel $\mathcal{A}$:}
The matrix  $\Lambda$ for the amplitude-damping channel $\mathcal{A}$ is given by $\operatorname{diag}(\sqrt{1-p},\sqrt{1-p},1-p)$ and $\vec{\kappa}=(0,0,p)^T$,  where $0\leq p\leq 1$. This channel can again be implemented by M\&P schemes only if $p=1$.
Considering the maximal coherence of the states in the image of this channel with respect to the computational basis shows that $Q_+(\mathcal{A})\leq \sqrt{1-p}$. Using that $\lambda_1\leq Q_-\leq Q_0\leq Q_+$ leads to $Q_-=Q_0= Q_+=\sqrt{1-p}$. Thus, $Q_-$ excludes M\&P schemes for $p<\frac45$, whereas $Q_+$ and $Q_0$ exclude them for $p<\frac12$.

{\it -- The depolarizing channel $\mathcal{D}$:}
The matrix  $\Lambda$ for the unital depolarizing channel $\mathcal{D}$ is given by $\operatorname{diag}(p,p,p)$,  where $0\leq p\leq 1$. This channel is an M\&P channel only if $p \leq \frac{1}{3}$.
Because of symmetry, it is clear that $Q_- = Q_0 = Q_+ = p$. 
Thus, $Q_-$ certifies the full range of non-M\&P channels if it is known that the channel is unital, while $Q_0$ and $Q_+$ exclude M\&P schemes in the case of $p > \frac12$.

\section{Experimental estimation of the quality of a quantum memory}
\begin{figure}[t]
 \centering
 \includegraphics[width=0.8\columnwidth]{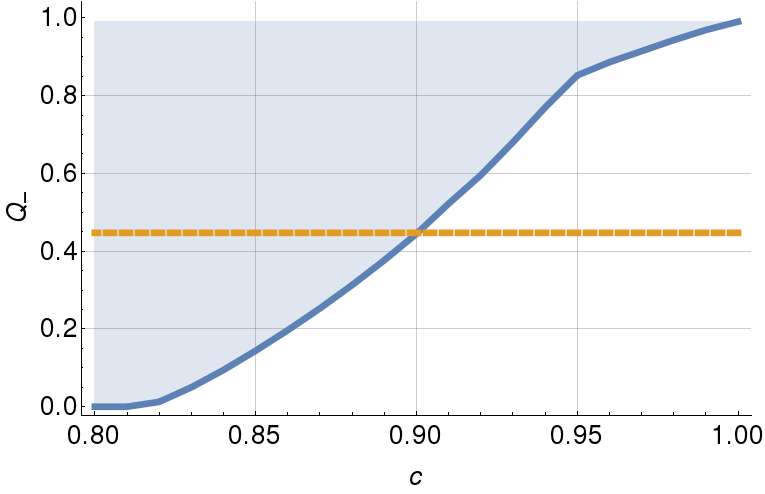}
 
 \caption{Lower bound (solid blue line) and allowed values above this bound (in blue) for the quality measure $Q_-$, given that in certain directions a coherence of at least $c$ is measured. The upper bound for M\&P channels of $\frac1{\sqrt5}$ is displayed by the dashed orange line. }
 \label{fig:l1bound}
\end{figure}
In this section, we explain how to determine a lower bound on the quality measures 
from experimental data for qubit systems for channels close to the identity channel. This situation is of major interest, as a perfect quantum memory corresponds to the identity channel.

Obviously, it is possible to obtain (lower bounds on) the quality measures by performing process tomography of the channel and then using the obtained characterization. However, process tomography requires the ability to prepare a set of input states with a high precision as well as many well characterized measurements \cite{nielsentomography, tomographyresource}. Here, we only assume that one can prepare three different states $\{\rho_i\}_{i=1}^3$ such that for the output states one can certify a lower bound $c_i\in [0,1]$ on the coherences
\begin{align}
C_{U_x} [\mathcal{M}(\rho_1)]\geq c_1, &\quad C_{U_y} [\mathcal{M}(\rho_1)]\geq c_1, \nonumber \\
C_{U_x} [\mathcal{M}(\rho_2)]\geq c_2, &\quad C_{U_z} [\mathcal{M}(\rho_2)]\geq c_2, \nonumber \\
C_{U_y} [\mathcal{M}(\rho_3)]\geq c_3, &\quad C_{U_z} [\mathcal{M}(\rho_3)]\geq c_3, \label{eqbc2}
\end{align}
where the $U_j$ correspond to the usual $x$, $y$ and $z$ directions on the Bloch sphere (i.e., $U_j=e^{i \sigma_j \pi/4}$ for $j=x,y,z$). 
This can, for instance, be achieved using the method from Ref.~\cite{yu2018detecting}. 
If the input states are chosen carefully and the channel is close enough to a unitary transformation, it suffices to conduct only three measurements in total.
These measurements certify that there are states close to the eigenstates of the Pauli matrices in the image of $\mathcal{M}$.
Furthermore, we only assume a bound on the coherence of the output of the quantum memory; 
nothing additional is assumed for the input- or output states.

For simplicity, we consider the case where $c:=c_1=c_2=c_3$. As the smallest semiaxis is a lower bound on $Q_-$, one can determine the channel that shows the smallest possible $\lambda_1$ compatible with the observed data. In particular, it is required that the image of the channel contains states for which the bounds given in Eqs.~(\ref{eqbc2}) are fulfilled. For $c>\sqrt{\frac23}\approx 0.82$, there must be at least three different states close to the boundary of the Bloch sphere. Numerically optimizing over all compatible channels leads to the lower bounds depicted in Fig.~\ref{fig:l1bound}.
Hence, for values of $c \gtrsim 0.82$ it is possible to obtain nontrivial lower bounds on the quality measure $Q_-$ (and, hence, also on $Q_0$ and $Q_+$) by having access to only a few lower bounds on the coherences of three different states. M\&P channels can be excluded with certainty if $Q_- > \frac1{\sqrt5} \approx 0.45$, which is given for $c\gtrsim 0.9$.

As an example, consider the amplitude-damping channel $\mathcal{A}$ from above. One can find states for which $c = \sqrt{1-p}$ and, thus, exclude M\&P channels for $p\lesssim 0.19$.\\

For higher-dimensional channels, the estimation is more involved.
In the following, we discuss how experimental data from higher-dimensional quantum memories $\mathcal{M}$ could be used to estimate the memory performance measure $Q_0(\mathcal{M})$. 
Since we know that $Q_0(\mathcal{M}) = 1$ iff $\mathcal{M}$ is unitary, we write $\mathcal{M} = \mathcal{V} + \mathcal{K}$, where $\mathcal{V}$ is some unitary channel and $\mathcal{K}(\rho) = \mathcal{M}(\rho) - \mathcal{V}(\rho)$ for all states $\rho$.
From Lemma 5, it follows that with respect to any basis $U$ with basis vectors $\ket{b_i}$, there always exist two maximally coherent states $\ket{\phi}$ and $\ket{\psi}$ such that $\mathcal{V}(\ket\phi) = \ket\psi$.
Thus, for $\rho = \mathcal{M}(\ket\phi) = \ket\psi\bra\psi + \mathcal{K}(\ket\phi\bra\phi)$ we have that
\allowdisplaybreaks
\begin{align}
C_U(\rho) &\geq - \frac{1}{D-1} \trace(W\rho) \nonumber \\
&= \frac{1}{D-1} \left( D \bra\psi \rho \ket\psi  - 1\right) \nonumber \\
&= \frac{1}{D-1} \left( D + D \bra\psi \mathcal{K}(\ket\phi\bra\phi) \ket\psi  - 1\right),
\end{align}
where we have used the notion of coherence witnesses introduced in Ref.~\cite{napoli2016robustness} with $W = \mathds{1} - D \ket\psi \bra\psi$, which gives a lower bound to the robustness of coherence.
Let $\lambda = \min_\sigma \lambda_{\min} [\mathcal{K}(\sigma)]$, i.e., $\lambda$ is the smallest eigenvalue of $\mathcal{K}(\sigma)$ for any state $\sigma$. 
Then,
\begin{align}
Q_0(\mathcal M) &= \min_U \max_{\vec{\alpha}}  C_U[\mathcal{M}(\ket{\Psi_U^{\vec{\alpha}}})] \nonumber \\
&\geq \frac{D(1+\lambda)-1}{D-1}.
\end{align}

To determine $\lambda$, we resort to the Choi matrix $\eta_{\mathcal K}$ of $\mathcal{K}$.
Hence
\begin{align}
  \lambda &= D \min_{\sigma,\ket s} \bra s \trace_A [(\sigma^T \otimes \mathds{1}) \eta_{\mathcal K}] \ket s \nonumber \\
          &= D \min_{\ket a \ket s} \bra a \bra s \eta_{\mathcal K} \ket a \ket s, 
\end{align}
 which can be estimated using experimental data.
For instance, let $\eta_{\mathcal M}$ and $\ket {\phi_\mathcal{V}} \bra {\phi_\mathcal{V}}$ be the Choi matrix of the cannels $\mathcal{M}$ and $\mathcal{V}$, respectively.  $\mathcal{V}$ might be guessed heuristically from the obtained data, determining $\ket {\phi_\mathcal{V}} \bra {\phi_\mathcal{V}}$.  The experimental data will impose linear constraints on $\eta_{\mathcal M}$ and, hence, also on $\eta_{\mathcal K}$. Using a see-saw optimization, it is possible to optimize $\lambda = D \min_{\sigma, \rho} \trace [(\sigma \otimes \rho) \eta_{\mathcal K}]$ over states $\sigma$ and $\rho$ with alternating semidefinite programs \cite{boyd}.

\section{Conclusions}
We introduced a physically motivated measure $Q_0$ that characterizes quantum memories
by their ability to preserve coherence. Using the upper and lower bound $Q_\pm$ we prove
that the measure fulfills all the desirable properties for such a quantifier. For a single-qubit quantum memory, the measure can be evaluated for many scenarios, even if only restricted experimental data is available. In contrast to full process tomography, our scheme does not require the precise preparation of states but demands only the certification of (sufficiently high) lower bounds on certain coherences of three unknown states.

For future work, it is desirable to extend the method to characterize and
verify other basic elements of quantum information processing. A simple extension
is the case of quantum teleportation, where the results can directly be applied.
More interesting is an application to two-qubit gates. The fact that a two-qubit 
gate generates entanglement, can be seen as the property that a certain two-level
coherence increases \cite{ringbauerpiani, kraftpiani}. In this sense, our method may be extended to characterize the 
entangling capability of multi-qubit quantum gates. 

{\it Note added:} Recently, we became aware of a similar approach, which was submitted recently \cite{shah2018quantumness}, also introducing a measure of quantum channels 
using coherence. Instead of considering the most robust or maximally coherent 
states, the authors are interested in the average coherence preserved over 
all states.

\begin{acknowledgments}
{\it Acknowledgments.---}
We thank Chau Nguyen and Tristan Kraft for discussions.
This work was supported by the DFG, the ERC (Consolidator 
Grant No. 683107/TempoQ), the Austrian Science Fund (FWF): J 4258-N27
and the House of Young Talents Siegen.
X.D.Y.~acknowledges funding from a CSC-DAAD scholarship.
\end{acknowledgments}

\section*{Appendix A: Proof that the measures are continuous}
Here, we prove the continuity of the quantities defined in the main text. To do so, we first show that the robustness of coherence is continuous. This settles a problem raised in Ref.~\cite{xi2019epsilon}.

\begin{lemma*}
The robustness of coherence is continuous.
\end{lemma*}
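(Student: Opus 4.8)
The plan is to show that the normalized robustness of coherence $C_U$ is continuous as a function of $\rho$ on the set of density matrices; since the quantities $Q_0$, $Q_\pm$ are obtained from $C_U$ by composing with the continuous map $\rho\mapsto\mathcal{M}(\rho)$ and then taking minima/maxima over the compact sets of unitaries and phase vectors (which preserve continuity), continuity of $C_U$ is the only nontrivial ingredient. First I would recall the dual/witness characterization from Ref.~\cite{napoli2016robustness}: the (unnormalized) robustness of coherence $R_U(\rho)$ equals $\max\{-\trace(W\rho)\}$ over coherence witnesses $W$, equivalently it admits a semidefinite-programming formulation whose feasible set of dual variables can be taken in a bounded region. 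The strategy is to exploit this to get a Lipschitz-type bound.

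The key steps, in order, are as follows. (i) Write $R_U(\rho)=\min\{s\ge 0\mid \rho+s\tau\in(1+s)\mathcal{I}_U\text{ for some }\tau\in\mathcal{D}\}$ and rephrase it as a linear program: minimize $s$ over $s\ge 0$, $\sigma\ge 0$ with $\trace\sigma=1$ (the incoherent target, scaled), and $\tau\ge 0$, $\trace\tau=1$, subject to $\rho+s\tau=(1+s)\sigma$. (ii) Pass to the dual, which by strong duality (Slater's condition holds since $\rho=\one/D$ lies in the relative interior of $\mathcal{I}_U$ and can be reached with $s=0$... more carefully, strong duality for this LP holds) gives $R_U(\rho)=\max_W -\trace(W\rho)$ over witnesses $W$ satisfying $W\ge -\one$ on... the precise constraint set from Ref.~\cite{napoli2016robustness} being $\langle i|W|i\rangle=1$ for all $i$ in basis $U$ (so diagonal part fixed) and $W\ge 0$ after adding $\one$, hence $\|W\|\le$ some dimension-dependent constant $K_D$. (iii) Then for two states $\rho,\rho'$, pick the optimal witness $W$ for $\rho$; since $W$ is also feasible for $\rho'$ we get $R_U(\rho')\ge -\trace(W\rho')=-\trace(W\rho)-\trace(W(\rho'-\rho))\ge R_U(\rho)-\|W\|_\infty\|\rho'-\rho\|_1\ge R_U(\rho)-K_D\|\rho'-\rho\|_1$, and by symmetry $|R_U(\rho)-R_U(\rho')|\le K_D\|\rho-\rho'\|_1$. (iv) Divide by $D-1$ to conclude $C_U$ is Lipschitz, in particular continuous. (v) Finally, note that for fixed $U$, $\rho\mapsto C_U(\mathcal M(\rho))$ is continuous in $(\mathcal M,\rho)$ jointly (composition of continuous maps), maximizing over the compact set $\{\vec\alpha\}$ or over the compact set of pure states preserves continuity (uniform continuity on compacta), and then minimizing over the compact unitary group(s) again preserves continuity; hence $Q_0,Q_\pm$ are continuous on the space of channels.

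The main obstacle I expect is step (ii): pinning down the exact form and, crucially, the \emph{boundedness} of the dual feasible set of witnesses, since continuity via the witness argument hinges on a uniform operator-norm bound on the optimal $W$. One must verify strong duality holds (no duality gap) and that the optimum is attained at a witness with $\|W\|_\infty\le K_D$ for an explicit $D$-dependent constant; the diagonal entries of $W$ being fixed to $1$ together with $W+\one\ge 0$ should force $\|W\|_\infty\le D-1$ or similar, but this needs a short argument. The rest — the LP reformulation, the Lipschitz estimate, and the compactness/continuity bookkeeping for the min-max over unitaries and phases — is routine. A minor point to handle carefully is that the problem "raised in Ref.~\cite{xi2019epsilon}" presumably concerns continuity in a setting where naive arguments fail (e.g.\ because the defining minimization is over an unbounded parameter $s$); the witness/dual formulation circumvents exactly this, so emphasizing that is the conceptual heart of the proof.
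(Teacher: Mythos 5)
Your argument is correct in outline, but it takes a genuinely different route from the paper. The paper works entirely on the \emph{primal} side: writing $C_R(\rho)=\frac{s^*}{D-1}$ with optimal incoherent $\sigma^*$ satisfying $\rho\le(1+s^*)\sigma^*$, it uses $\|\rho-\tau\|_{\trace}<\delta\Rightarrow\tau\le\rho+\delta\one$ to produce an explicit feasible point for $\tau$, namely the renormalized incoherent state $[(1+s^*)\sigma^*+\delta\one]/[(1+s^*)+\delta D]$, which immediately gives $C_R(\tau)\le C_R(\rho)+\frac{D}{D-1}\delta$ and hence Lipschitz continuity with no duality machinery at all. You instead go through the dual (witness) formulation and derive the Lipschitz bound $|R(\rho)-R(\rho')|\le\|W\|_\infty\|\rho-\rho'\|_1$ from the boundedness of the dual feasible set. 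This works, and the obstacle you flag is genuinely the crux of your route but is resolvable: with the Napoli et al.\ constraints $W\le\one$ and $\Delta(W)\ge0$ one has $\trace W=\trace\Delta(W)\ge0$ with all eigenvalues at most $1$, forcing every eigenvalue to be at least $-(D-1)$, so $\|W\|_\infty\le D-1$ (tight, e.g.\ for $W=\one-D\ketbra{\psi}{\psi}$); strong duality follows from strict primal feasibility ($\tilde\sigma=c\one$ with $c>\|\rho\|_\infty$). Your approach buys generality — it applies verbatim to any robustness-type quantifier whose dual feasible set is compact, and it yields the slightly better Lipschitz constant $1$ for the normalized quantity versus the paper's $\frac{D}{D-1}$ — at the cost of invoking SDP duality; the paper's primal perturbation is shorter, self-contained, and does not even require attainment of the optimum. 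Your step (v) on propagating continuity to $Q_0$ and $Q_\pm$ via the maximum theorem over compact parameter sets matches the paper's second appendix lemma, though the paper routes the channel dependence through the Choi state to make the joint continuity in $\mathcal{M}$ explicit.
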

\begin{proof}
For a $D$-dimensional state $\rho$, the normalized robustness of coherence is given by
\begin{align}
	C_R(\rho) = \frac{1}{D-1} \min_{\sigma \in \mathcal{I}_U} \left\{ s \geq 0 \middle\vert \rho \leq (1+s) \sigma \right\},
\end{align}
where $\mathcal{I}_U$ is the set of incoherent $D$-dimensional states with respect to the basis defined by $U$ \cite{piani2016robustness}. Continuity means that for all $\epsilon > 0$ there exists a $\delta > 0$ such that for states $\rho$ and $\tau$ with $\left\Vert \rho - \tau \right\Vert < \delta$, it holds that $\left\vert C_R(\rho) - C_R(\tau)\right\vert < \epsilon$. 

We use the trace norm, which, for Hermitian matrices, is the sum of the absolute values of the eigenvalues. Thus, $\left\Vert \rho - \tau \right\Vert_{\trace} = \sum_j \left\vert \lambda_j(\rho-\tau) \right\vert < \delta$ implies that all eigenvalues of $\tau-\rho$ are upper bounded by $\delta$ and hence,
\begin{align}
\tau-\rho \leq \delta \mathds{1}. \label{taurho}
\end{align}
Let $C_R(\rho) = \frac{s^*}{D-1}$ and let $\sigma^* \in \mathcal{I}_U$ be a state such that $\rho \leq (1+s^*) \sigma^*$. 
Then, together with Eq.~(\ref{taurho}) we have that $\tau - \delta \mathds{1} \leq \rho \leq (1+s^*) \sigma^*$, or $\tau \leq [(1+s^*)+\delta D] \frac{(1+s^*)\sigma^* + \delta \mathds{1}}{(1+s^*)+\delta D}$. Since $\frac{(1+s^*)\sigma^* + \delta \mathds{1}}{(1+s^*)+\delta D}$ is a normalized incoherent state, it follows that $C_R(\tau) \leq \frac{s^* + \delta D}{D-1} = C_R(\rho) + \epsilon$, where we use that $C_R(\rho) = \frac{s^*}{D-1}$ and we choose $\delta = \frac{D-1}{D} \epsilon > 0$. Analogously, one finds that $C_R(\rho) \leq C_R(\tau) + \epsilon$ which completes the proof. 
\end{proof}

Using this result, we prove the continuity of our quantities.

\setcounter{definition}{12}
\begin{lemma*}
 The quantities $Q_\pm$ and $Q_0$ are continuous.
\end{lemma*}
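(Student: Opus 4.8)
The plan is to leverage the continuity of the robustness of coherence, just established in the preceding lemma, together with the compactness of the relevant parameter spaces to transfer continuity from $C_R$ to the three quantities $Q_0$ and $Q_\pm$. The key observation is that each of these quantities is built from $C_R$ by composing with the channel action and then performing a sequence of minimizations and maximizations over compact sets (the unitary group for the bases $U, U'$, the torus of phase vectors $\vec{\alpha}$, and the set of states $\rho$ or unit vectors $\ket{\psi}$). Since a min or max over a compact index set of a jointly continuous family of functions preserves continuity (this is a standard fact; one can invoke it directly or note that $f(\mathcal{M}) = \min_x g(\mathcal{M},x)$ satisfies $|f(\mathcal{M}) - f(\mathcal{M}')| \leq \sup_x |g(\mathcal{M},x) - g(\mathcal{M}',x)|$, which is small once $g$ is uniformly continuous on the compact set, which in turn follows from joint continuity and compactness), it suffices to check that the inner expressions are jointly continuous in the channel and the optimization variables.

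Concretely, the first step is to fix a metric on the space of quantum channels — for instance, the distance induced by the trace norm on Choi matrices, or equivalently $\sup_\rho \|\mathcal{M}(\rho) - \mathcal{M}'(\rho)\|_{\trace}$ — and to observe that for any fixed state $\sigma$, the map $\mathcal{M} \mapsto \mathcal{M}(\sigma)$ is continuous (in fact Lipschitz) in this metric. The second step is to note that the input states appearing in the definitions, $\ketbra{\Psi_U^{\vec\alpha}}{\Psi_U^{\vec\alpha}}$ and the pure states $\ketbra{\psi}{\psi}$, depend continuously on the optimization parameters $(U,\vec\alpha)$ or $\ket\psi$ (which range over compact sets), so the composite map $(\mathcal{M}, U, \vec\alpha) \mapsto \mathcal{M}(\ket{\Psi_U^{\vec\alpha}})$ is jointly continuous; composing with the continuous map $(\rho, U') \mapsto C_{U'}(\rho)$ — which is jointly continuous since $C_{U'}(\rho) = C_R(U'^\dagger \rho U')$ in the standard basis and $C_R$ is continuous by the previous lemma while conjugation by $U'$ is continuous — yields a jointly continuous function on a compact domain. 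The third step is simply to apply the min-max stability lemma the requisite number of times: for $Q_+$ one maximizes over $\rho$ then minimizes over $U$; for $Q_0$ one maximizes over $\vec\alpha$ then minimizes over $U$; for $Q_-$ one maximizes over $\vec\alpha$ then minimizes over the pair $(U,U')$.

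I expect the only mild subtlety — and hence the step most worth spelling out — to be the joint continuity of $(\rho, U') \mapsto C_{U'}(\rho)$: the preceding lemma gives continuity of $C_R$ in $\rho$ for a fixed basis, so one must additionally argue continuity in the basis, which is immediate from the identity $C_{U'}(\rho) = C_R\bigl((U')^\dagger \rho U'\bigr)$ together with continuity of matrix multiplication and of $C_R$. Everything else is routine: the parameter spaces $\mathrm{U}(D)$, the $D$-torus of phases, and the set of density matrices are all compact, and a uniform limit of continuous functions argument (or the elementary inequality for $\min$/$\max$ of a uniformly-close family) closes the proof. No hard estimates are needed beyond what the robustness-of-coherence lemma already supplies.
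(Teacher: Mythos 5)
Your proposal is correct and follows essentially the same route as the paper's proof: both reduce the claim to the continuity of the robustness of coherence from the preceding lemma, establish joint continuity of the inner expression $C_{U'}[\mathcal{M}(\ket{\Psi_U^{\vec\alpha}})]$ in the channel and the optimization parameters (the paper does this via the Choi matrix and partial trace, you via a Lipschitz bound on $\mathcal{M}\mapsto\mathcal{M}(\sigma)$, which are equivalent in finite dimensions), and then invoke stability of $\min$/$\max$ over compact parameter sets --- the paper cites Berge's maximum theorem where you give the elementary uniform-continuity argument. These are presentational differences only; no gap.
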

\begin{proof}
  Let $\mathcal{M}$ be a quantum channel; then the corresponding Choi state $\eta_{\mathcal{M}}$ is given by \cite{pillis1967linear, jamiolkowski1972linear, choi1975completely}
  \begin{align}
   \eta_{\mathcal{M}} = \one \otimes \mathcal{M}(\ket{\phi^+}),
  \end{align}
  with $\ket{\phi^+}$ being the maximally entangled state $\frac{1}{\sqrt{D}}\sum_i \ket{ii}$.
  Using the Choi state, the inner part of the expressions for $Q_-$ and $Q_0$ can be written as
   \begin{align}
   \begin{split}
       & C_{U^\prime}(\mathcal{M}(\ket{\Psi_U^{\vec{\alpha}}})) \\
   = & C_{U^\prime}(D\,\trace_A [(\ketbra{\Psi_U^{\vec{\alpha}}}{\Psi_U^{\vec{\alpha}}}^T \otimes \one) \eta_\mathcal{M} ]),
   \end{split}
  \end{align}
  which is continuous in $\alpha, U, U^\prime$ and $\eta_\mathcal{M}$. Repeatedly applying the maximum theorem \cite{berge1997topological}, and using the fact that the robustness of coherence is continuous, shows that $Q_-$ and $Q_0$ are continuous in $\eta_\mathcal{M}$.

  If a sequence of channels $\{\mathcal{M}_i\}_i$ converges to a channel $\mathcal{M}$ with regard to~the diamond norm, then the sequence $\{\eta_{\mathcal{M}_i}\}_i$
  must converge to $\eta_\mathcal{M}$ \cite{kitaev1997quantum, watrous2018theoryofquant}.\footnote{Note that in finite dimensional systems, all norms and topologies are equivalent.}
  This implies that the function above is also continuous in $\mathcal{M}$. For $Q_+$, a similar argument holds. 
\end{proof}

\section*{Appendix B: Proof of Theorem 6}
In this Appendix, we prove Theorem 6 from the main text.

\setcounter{definition}{5}
\begin{theorem}
$Q_\pm$ and $Q_0$ fulfill property M1', i.e., if $Q(\mathcal{M})=1$, then 
$\mathcal{M}$ is a unitary channel.
\end{theorem}
\begin{proof}
To prove the theorem, it is sufficient to consider $Q_+(\mathcal{M})=1$, as $Q_-(\mathcal{M})\leq Q_0(\mathcal{M})\leq Q_+(\mathcal{M})\leq 1$. If  $Q_+(\mathcal{M})=1$, then for all unitaries $U$ it holds that
\begin{equation}
\max_{\ket{\psi}}  C_U[\mathcal{M}(\ket{\psi})]=1.
\end{equation}
This implies that for all $U$, there exists a state $\ket{\Phi}$ and a 
maximally coherent state $\ket{\Psi}$ with regard to~$U$ such that 
\begin{equation}
\label{eqQ1}\mathcal{M}(\ket{\Phi})=\ket{\Psi}.
\end{equation}

To prove the statement, we show the following three facts:
(i) If $Q_+(\mathcal{M})=1$, then we find a basis $\{\ket{\Phi_i}\}$ that 
is mapped to a basis $\{\ket{\Psi_i}\}$ by $\mathcal{M}$.
(ii) In the range of $\mathcal{M}$, there exist vectors 
$\{\ket{\Psi_{1j}} = \sum_{i=1}^D \beta_i^{(j)} \ket{\Psi_i}\}_{j=2}^D$ 
with the property $\beta_1^{(j)} \neq 0 \neq \beta_j^{(j)}$ for all $j$.
(iii) From the existence of the $\ket{\Psi_i}$ and $\ket{\Psi_{1j}}$, it follows that $\mathcal{M}$ is unitary.

For the first fact, in order to find state $\ket{\Psi_1}$, we simply 
choose a random basis 
and obtain a pure (maximally coherent) state in the range of $\mathcal{M}$ 
due to the property $Q_+(\mathcal{M})=1$. For the second state $\ket{\Psi_2}$, 
we choose a basis with $\ket{\Psi_1}$ as a basis state. The corresponding 
maximally coherent state has an overlap of 
$\vert\braket{\Psi_1|\Psi_2}\vert=\frac1{\sqrt{D}}$ 
and is therefore linearly independent. All other states $\ket{\Psi_i}$ can be 
found step by step: Let us assume that we have already found  
$\ket{\Psi_1}, \dots, \ket{\Psi_m}$ linearly independent states. 
We construct an orthonormal set of states spanning the same subspace 
and extend it to an orthonormal basis. The corresponding maximally 
coherent state has nonvanishing overlap with the space orthogonal to $\operatorname{span}\{\ket{\Psi_1}, \dots, \ket{\Psi_m}\}$ and is 
therefore also linearly independent.

With this procedure we obtain the nonorthonormal basis $\{\ket{\Psi_i}\}$. 
The corresponding preimages also form a basis, as,
from the Kraus decomposition (see also below) it follows that the 
dimension of their span must be equal to $D$ as well. 

For the second fact,  we  have to show the existence of the vectors 
$\{\ket{\Psi_{1j}}\}$ with 
the properties mentioned above. It suffices to show the existence of 
$\ket{\Psi_{12}}$; the proof for the other $\ket{\Psi_{1j}}$ is analogous.

Given the basis $\{\ket{\Psi_i}\}$, we consider the normalized dual basis $\{\ket{\gamma_i}\}$ with the property $\braket{\gamma_i|\Psi_j} = c_i \delta_{ij}$ 
for some $c_i > 0$ \cite{spiegel1959vector}. In this basis, 
$\beta_i^{(j)} = c_i^{-1}\braket{\gamma_i|\Psi_{1j}}$ holds. Now we search for
a vector $\ket{\Psi_{12}}$ in the range of $\mathcal{M}$ with the properties $\braket{\gamma_1|\Psi_{12}} \neq 0 \neq \braket{\gamma_2|\Psi_{12}}$, as from 
these conditions the presence of the desired coefficients $\beta_1^{(2)}$ and 
$\beta_2^{(2)}$ follows.

To this end, consider the orthonormal basis $\ket{b_1} = \ket{\gamma_1}$, 
$\ket{b_2} \propto \ket{\gamma_2} - \braket{\gamma_1|\gamma_2} \ket{\gamma_1}$ 
and the other $\ket{b_i}$ arbitrary.
The maximally coherent state $\ket{\Psi}$ in the range of $\mathcal{M}$ in this 
basis can be written as $\ket{\Psi}=\frac1{\sqrt D}\sum_{k=1}^D e^{i\phi_k} \ket{b_k}$. 
The overlaps are given by
\begin{align}
\braket{\gamma_1|\Psi} & \propto e^{i\phi_1} \neq 0, \nonumber\\
\braket{\gamma_2|\Psi} & \propto \braket{\gamma_2|b_1}e^{i\phi_1} + \braket{\gamma_2|b_2}e^{i\phi_2}.
\end{align}
If $\vert\braket{\gamma_2|b_1}\vert \neq \vert\braket{\gamma_2|b_2}\vert$, $\ket{\Psi_{12}} = \ket{\Psi}$ satisfies the desired properties.

Otherwise, we instead choose the basis $\ket{b_1^\prime} = \sqrt{\frac23} \ket{b_1} + \sqrt{\frac13} e^{i\theta} \ket{b_2}$ and $\ket{b_2^\prime} = \sqrt{\frac13} \ket{b_1} - \sqrt{\frac23} e^{i\theta} \ket{b_2}$ and the other $\ket{b_i'}$ arbitrary.
Now, the maximally coherent state $\ket{\Psi^\prime}=\frac1{\sqrt D}\sum_{k=1}^d e^{i\phi_k^\prime} \ket{b_k^\prime}$, with respect to the basis $\{\ket{b_i^\prime}\}$, in the range of $\mathcal{M}$ has the overlaps
\begin{align}
\braket{\gamma_1|\Psi^\prime}  \propto \,& \sqrt{\frac23} e^{i\phi_1^\prime} + \sqrt{\frac13} e^{i\phi_2^\prime} \neq 0,\\
\begin{split}
\braket{\gamma_2|\Psi^\prime}  \propto \,& (\sqrt{\frac23} e^{i\phi_1^\prime} + \sqrt{\frac13} e^{i\phi_2^\prime})\braket{\gamma_2|b_1} \\
                                       & + (\sqrt{\frac13} e^{i\phi_1^\prime} - \sqrt{\frac23} e^{i\phi_2^\prime})e^{i\theta}\braket{\gamma_2|b_2}.
\end{split}\label{eq:overlap2}
\end{align}

As in this case $\vert\braket{\gamma_2|b_1}\vert = \vert\braket{\gamma_2|b_2}\vert$, we can choose $\theta$ such that $\braket{\gamma_2|b_1} = e^{i\theta}\braket{\gamma_2|b_2} \neq 0$. 
Then the right-hand side of Eq.~(\ref{eq:overlap2}) is proportional to $(\sqrt2 + 1)e^{i\phi_1^\prime} + (1-\sqrt2)e^{i\phi_2^\prime}$, which cannot vanish. 
Thus, in this case we choose $\ket{\Psi_{12}} = \ket{\Psi^\prime}$.

Finally, concerning the third fact, as $\mathcal{M}$ is a quantum channel, it 
admits a Kraus representation, i.e.,~$\mathcal{M}(\rho)=\sum_{l=1}^r K_l \rho K_l^\dagger$
with $\sum_l K_l^\dagger K_l=\one$. Using the fact that the $\ket{\Phi_i}$ are mapped 
to pure states, we have for all $l = 1,\ldots, r$ that
\begin{align}\label{ko1}
K_l\ket{\Phi_i}&=\mu_{li} \ket{\Psi_i}
\end{align}
for $i=1,\ldots,D$, and
\begin{equation}
K_l\ket{\Phi_{1j}}=\kappa_{lj} \ket{\Psi_{1j}} \label{eq:Kl_phi1j}
\end{equation}
for some $\ket{\Phi_{1j}} = \sum_{k=1}^D \alpha_k^{(j)} \ket{\Phi_k}$ and $j=2,\ldots,D$.

Decomposing the right-hand side of Eq.~(\ref{eq:Kl_phi1j}) in terms of the basis $\{\ket{\Psi_i}\}$ and using linearity on the left-hand side, we have
\begin{align}
 \sum_{k=1}^D \mu_{lk} \alpha_k^{(j)} \ket{\Psi_k} = \kappa_{lj} \sum_{k=1}^D \beta_k^{(j)} \ket{\Psi_k}
\end{align}
for all $l$. Thus, for all $l,j$ and $k$, 
\begin{align}
 \mu_{lk} \alpha_k^{(j)} = \kappa_{lj} \beta_k^{(j)}.
\end{align}
For a fixed $j$, consider the two equations for $k=1$ and $k=j$, where the corresponding $\beta_k^{(j)}$ do not vanish by assumption.
If $\alpha_1^{(j)}$ or $\alpha_j^{(j)}$ were 0, $\kappa_{lj}=0$ for all $l$ would follow. This would imply that $\mathcal{M}(\ket{\Phi_{1j}})=0$, which cannot be true if $\mathcal{M}$ is a channel.

Otherwise, if $\kappa_{lj}$ was 0 for one $l$, then this would imply that $\mu_{l1} = \mu_{lj} = 0$ for this $l$. However, $\mu_{l1} = 0$ implies that $\kappa_{lj^\prime} = 0$ for all $j^\prime$, which in turn implies that $\mu_{lj^\prime}=0$ for all $j^\prime$. Thus, $K_l$ would map a whole basis to 0 and, therefore, vanishes and can be neglected from the decomposition of the channel.

Thus, we have that $\kappa_{lj}\neq0$ and, from that, $\mu_{l1}\neq0\neq\mu_{lj}$. Then the ratio
\begin{align}
 \frac{\mu_{l1}}{\mu_{lj}} = \frac{\beta_1^{(j)} \alpha_j^{(j)}}{\beta_j^{(j)}\alpha_1^{(j)}}
\end{align}
is independent of $l$. As this holds for all $j$, it follows from Eq.~(\ref{ko1}) that the $K_l$ must be proportional to each other, i.e.,~$K_l \propto K_{l^\prime}$. Using now that $\mathcal{M}$ is trace preserving, i.e.~$\sum_l K_l^\dagger K_l=\one$, leads to $K_l^\dagger K_l\propto \one$. Thus, all Kraus operators have to be proportional to the same unitary $V$, and hence, $\mathcal{M}(\rho) = V \rho V^\dagger$.
\end{proof}

It follows immediately from the proof above that, to completely characterize a unitary channel, it is sufficient to prepare a basis which is mapped to another basis by that channel and another pure state in the image which has nonvanishing coefficients in the latter basis. 
If one can find such states, the channel is guaranteed to be unitary and is uniquely determined by those states up to a global phase. 
Since pure states can be characterized with few measurements \cite{goyeneche2015five, carmeli2016stable}, the same can also be done with unitary quantum channels $\mathcal{M}$. 
Such a characterization has also been constructively obtained in Ref. \cite{baldwin2014quantum}.

\end{document}